\newtheorem{theorem}{Theorem}
\newtheorem{claim}{Claim}
\newtheorem{corollary}{Corollary}
\newtheorem{definition}{Definition}
\newtheorem{lemma}{Lemma}
\newcommand{\defeq}{:=}
\newcommand{\algname}{\textsc{NK-means}}
\newcommand{\framename}{\textsc{SampleCoreset}}
\newcommand{\km}{\textsc{k-means++}}
\begin{document}

% If your paper is accepted and the title of your paper is very long,
% the style will print as headings an error message. Use the following
% command to supply a shorter title of your paper so that it can be
% used as headings.
%
%\runningtitle{I use this title instead because the last one was very long}

% If your paper is accepted and the number of authors is large, the
% style will print as headings an error message. Use the following
% command to supply a shorter version of the authors names so that
% they can be used as headings (for example, use only the surnames)
%
%\runningauthor{Surname 1, Surname 2, Surname 3, ...., Surname n}

\runningauthor{Sungjin Im, Mahshid Montazer Qaem, Benjamin Moseley, Xiaorui Sun, Rudy Zhou}

\twocolumn[

\aistatstitle{Fast Noise Removal for $k$-Means Clustering}

\aistatsauthor{Sungjin Im \And Mahshid Montazer Qaem \And Benjamin Moseley}

\aistatsaddress{University of California at Merced \And University of California at Merced \And Carnegie Mellon University}

\aistatsauthor{Xiaorui Sun \And Rudy Zhou}

\aistatsaddress{University of Illinois at Chicago \And Carnegie Mellon University}

]
\begin{abstract}
    This paper considers $k$-means clustering in the presence of noise. It is known that $k$-means clustering is highly sensitive to noise, and thus noise should be removed to obtain a quality solution. A popular formulation of this problem is called \emph{$k$-means clustering with outliers}. The goal of $k$-means clustering with outliers is to discard up to a specified number $z$ of points as noise/outliers and then find a $k$-means solution on the remaining data. The problem has received significant attention, yet current algorithms with theoretical guarantees suffer from either high running time or inherent loss in the solution quality. The main contribution of this paper is two-fold. Firstly, we develop a simple greedy algorithm that has \emph{provably} strong worst case guarantees. The greedy algorithm  adds a simple preprocessing step to remove noise, which can be combined with any $k$-means clustering algorithm. This algorithm gives the first pseudo-approximation-preserving reduction from $k$-means with outliers to $k$-means without outliers. Secondly, we show how to construct a coreset of size $O(k \log n)$.  When combined with our greedy algorithm, we obtain a scalable, near linear time algorithm. The theoretical contributions are verified experimentally by demonstrating that the algorithm quickly removes noise and obtains a high-quality clustering. 
\end{abstract}

\section{Introduction}

Clustering is a fundamental unsupervised learning method that offers a compact view of  data sets by grouping similar input points. Among various clustering methods,  $k$-means clustering is one of the most popular clustering methods used in practice, which is defined as follows: given a set $X$ of $n$ points in Euclidean space\footnote{The input space can be extended to an arbitrary metric space.} $\mathbb{R}^d$  and a target number of clusters $k$, the goal is to choose a set $C$ of $k$ points from $\mathbb{R}^d$ as centers, so as to minimize the $\ell_2$-loss, i.e., the sum of the squared distances of every point $x \in X$ to its closest center in $C$. 

Due to its popularity, $k$-means clustering has been extensively studied for decades both theoretically and empirically, and as a result, various novel algorithms and powerful underlying theories have been developed. In particular, because the clustering problem is NP-hard, several constant-factor approximation algorithms have been developed \citep{charikar1999improved,KanungoMNPSW04,Kumar2004,feldman2007ptas}, meaning that their output is always within an $O(1)$ factor of the optimum. One of the most successful algorithms used in practice is $k$-means++ \citep{ArthurV07}. The algorithm $k$-means++ is a preprocessing step used to set the initial centers when using Lloyd's algorithm \citep{Lloyd82}.  Lloyd's algorithm is a simple local search heuristic that alternates between updating the center of every cluster and reassigning points to their closest centers. 
%is a popular local search heuristic. 
$k$-means++ has a provable approximation guarantee of $O(\log k)$  by carefully choosing the initial centers.

$k$-means clustering is highly sensitive to noise, which is present in many data sets. Indeed, it is not difficult to see that the $k$-means clustering objective can vary significantly even with the addition of a single point that is far away from the true clusters. In general, it is a non-trivial task to filter out noise; without knowing the true clusters, we cannot identify noise, and vice versa. While there are other clustering methods, such as density-based clustering \citep{eksx96}, that attempt to remove noise, they do not replace $k$-means clustering because they are fundamentally different than $k$-means.
%\todo{I wouldn't say lloyd's is a local search. It's really an alternating method. BENs Reply:  I disagree.  It totally is local search or as we say is OR "improvement hueristic"}

Consequently, there have been attempts to study $k$-means clustering in the presence of noise. The following problem formulation is the most popular formulation in the theory \citep{Chen08,CharikarKMN01,McCutchenK08,guha2017distributed}, machine learning \citep{MalkomesKCWM15,ChawlaG13,li2018distributed} and database communities \citep{GuptaKLMV17}. Note that traditional $k$-means clustering is a special case of this problem when $z = 0$. Throughout, for $x, y \in \mathbb{R}^d$, we let $d(x, y)$ denote the $\ell_2$ distance between $x$ and $y$. For a subset of points $Y$, let $d(x, Y) \defeq \min_{y \in Y} d(x, y)$.

%\smallskip
%\begin{definition}[$k$-Means with Outliers]
%	In the $k$-means with outliers problem, we are given as input a subset $X = \{x_1, \dots, x_n\} \subset M$, a parameter $k \in \mathbb{N}$ (number of centers), and a parameter $z \in \mathbb{N}$ (number of outliers.)
%	The goal is to choose a collection of $k$ centers, $C = \{c_1, \dots c_k\} \subset X$, to minimize:
%	\[\sum\limits_{x \in X_z(C)} d^2(x,C)\]
%	, where $X_z(C) \subset X$ is the subset of $n-z$ input points with the smallest squared distances to $C$.
%\end{definition}

\smallskip
\begin{definition}[$k$-Means with Outliers]
	In this problem we are given as input a subset $X$ of $n$ points in $\mathbb{R}^d$, a parameter $k \in \mathbb{N}$ (number of centers), and a parameter $z \in \mathbb{N}$ (number of outliers).
	The goal is to choose a collection of $k$ centers, $C \subseteq \mathbb{R}^d$, to minimize:
	$\sum_{x \in X_z(C)} d^2(x,C)$
	, where $X_z(C) \subseteq X$ is the subset of $n-z$ input points with the smallest distances to $C$.
\end{definition}

Because this problem generalizes $k$-means clustering, it is NP-hard, and in fact, turns out to be significantly more challenging. The only known constant approximations \citep{Chen08,KrishnaswamyLS18} are highly sophisticated and are based on complicated local search 
or linear program rounding. They are unlikely to be implemented in practice due to their runtime and complexity. Therefore, there have been strong efforts to develop simpler algorithms that offer good approximation guarantees when allowed to discard more than $z$ points as outliers \citep{CharikarKMN01,meyerson2004k,GuptaKLMV17}, or heuristics \citep{ChawlaG13}. Unfortunately, the aforementioned algorithms with theoretical guarantees suffer from either high running time or inherent loss in solution quality.

In contrast , concurrently with our work, \cite{DBLP:conf/nips/BhaskaraVX19} developed a simple algorithm for $k$-means with outliers that gives a good approximation guarantee when allowed to discard more than $z$ outliers or use more than $k$ centers, which we discuss in more detail when we describe our results.

%However, all existing algorithms have running time that is super-quadratic in the input size, or have no guarantees on the approximation ratio and running time. 

%all such existing algorithms significantly deviate from the elegant alternating nature of Lloyd's algorithm, and therefore, has been of limited use: 

%\newcommand{\outa}{NK-\mathsc{Means}~}

\subsection{Our Results and Contributions}

The algorithmic contribution of this paper is two-fold, and further these contributions are validated by experiments. In this section, we state our contribution and discuss it in detail compared to the previous work.

\paragraph{Simple Preprocessing Step for Removing Outliers with Provable Guarantees:} In this paper we develop a simple preprocessing step, which we term \algname,  to effectively filter out outliers. \algname~ stands for noise removal for $k$-means. Our proposed preprocessing step can be combined with \emph{any} algorithm for $k$-means clustering.  Despite the large amount of work on this problem, we give the \emph{first} reduction to the standard $k$-means problem.  In particular, \algname~can be combined with the popular $k$-means++. The algorithm is the fastest known algorithm for the $k$-means with outliers problem.  Its speed and simplicity gives it the potential to be  used in practice. Formally, given an $\alpha$-approximation for $k$-means clustering, we give an algorithm for $k$-means with outliers that is guaranteed to discard up to $O(kz)$ points such that the cost of remaining points is at most $O(\alpha)$ times the optimum that discards up to exactly $z$ points. While the theoretical guarantee on the number of outliers is larger than $z$ on  worst-case inputs, we show that \algname~removes at most $O(z)$ outliers under the assumption that every cluster in an optimal solution has at least $3z$ points. We believe that this assumption captures most practical cases since otherwise significant portions of the true clusters can be discarded as outliers. In actual implementation, we can guarantee discarding exactly $z$ points by discarding the farthest $z$ points from the centers we have chosen. It is worth keeping in mind that all (practical) algorithms for the problem discard more than $z$ points to have theoretical guarantees \citep{CharikarKMN01,meyerson2004k,GuptaKLMV17, DBLP:conf/nips/BhaskaraVX19}.

When compared to the concurrent work of \cite{DBLP:conf/nips/BhaskaraVX19}, our work differs in two main ways. Firstly, our algorithm consists of a pre-processing step that can be combined with any $k$-means algorithm, such as $k$-means++, while theirs consists of a modification to the standard  $k$-means++ algorithm. Further, our algorithm throws away a multiplicative $O(k)$-factor extra outliers in the worst case and uses no extra centers, while in their work they prove a trade-off between extra outliers and extra centers. In particular, they obtain a $O(\log k)$ approximation for $k$-means with outliers using an extra multiplicative $O(\log k)$ outliers and no extra centers, and a $O(1)$-approximation using an extra multiplicative $O(1)$ outliers and centers.
%it is very unlikely to discard a considerably more number of outliers is extremely unlikely to happen unless a cluster size is close to the target  number of outliers -- in such cases, it makes more sense to increase $k$ (choose more centers) or decrease $z$ (discard less outliers). 

\paragraph{New Coreset Construction:} When the data set is large, a dominant way to speed up clustering is to first construct a coreset and then use the clustering result of the coreset as a solution to the original input. Informally, a set of (weighted) points $Y$ is called a coreset of $X$ if a good clustering of $Y$ is also a good clustering of $X$ (see Section \ref{sec:coreset-def} for the formal definition of coreset.)

The idea is that if we can efficiently construct such $Y$, which is significantly smaller than $X$, then we can speed up any clustering algorithm with little loss of accuracy. In this paper, we give an algorithm to construct a coreset of size $O(k \log n)$ for $k$-means with outliers. Importantly, the coreset size is independent of $z$ and $d$ - the number of outliers and dimension, respectively.

%Previously, the best known coreset size was either $O(k + z)$ \citep{GuptaKLMV17} or $O(\frac{kn}{z} \log n)$ \citep{meyerson2004k} -- thus, the coreset size was $\Omega(\sqrt n)$ in general. We achieve a coreset of size $O(k\log n)$ by combining these two coreset construction methods.

\paragraph{Experimental Validation:} 
Our new coreset enables the implementation and comparison of all potentially practical algorithms, which are based on primal-dual  \citep{CharikarKMN01}, uniform sampling \citep{meyerson2004k}, or local search \citep{ChawlaG13,GuptaKLMV17}. It is worth noting that, to the best of our knowledge, this is the first paper to implement the primal-dual based algorithm \citep{CharikarKMN01} and test it for large data sets. 
We also implemented natural extensions of $k$-means++ and our algorithm \algname. We note that for fair comparison, once each  algorithm chose the $k$ centers, we considered all points and discarded the farthest $z$ points. Our experiments show that our $\algname$ consistently outperforms other algorithms for both synthetic and real-world data sets with little running time overhead as compared to $k$-means++. 

\iffalse
Thanks to our new coreset construction, we were able to implement and test all known algorithms for large data sets that are potentially practical. There algorithms include:
\begin{enumerate}
    \item Primal-dual \citep{CharikarKMN01} on our new coreset. 
    \item Passive uniform sampling followed by $k$-means++.
    \item Aggressive uniform sampling followed by $k$-means++.
    \item $k$-means++ on the original input.
    \item \algname followed by $k$-means++, on our new coreset. 
    \item Two local search algorithms by \citep{ChawlaG13,GuptaKLMV17}, on our new coreset. 
\end{enumerate}
We note that once these algorithms identify the $k$ centers, we consider all points and discard the farthest $z$ points. Our experiments show that our $\algname$ consistently outperforms all other algorithms. 
\fi

\subsection{Comparison to the Previous Work}

\paragraph{Algorithms for $k$-Means with Outliers:}

To understand the contribution of our work, it is important to contrast the algorithm with previous work. We believe a significant contribution of our work is the algorithmic simplicity and speed as well as the theoretical bounds that our approach guarantees. In particular, we will discuss why the previous algorithms are difficult to use in practice.

The first potentially practical algorithm developed is based on primal-dual \citep{CharikarKMN01}. Instead of solving a linear program (LP) and converting the solution to an integer solution, the primal-dual approach only uses the LP and its dual to guide the algorithm. However, the algorithm does not scale well and is not easy to implement.
In particular, it involves increasing variables uniformly, which requires $\Omega(n^2)$ running time and extra care to handle precision issues of fractional values. As mentioned before, this algorithm was never implemented prior to this paper. Our experiments show that this algorithm considerably under-performs compared to other algorithms. 

The second potentially practical algorithm is based on uniform sampling \citep{meyerson2004k}. The main observation of \cite{meyerson2004k} is that if every cluster is large enough, then a small uniform sample can serve as a coreset. This observation leads to two algorithms for $k$-means clustering with outliers: (i) (implicit) reduction to $k$-means clustering via conservative uniform sampling and (ii) (explicit) aggressive uniform sampling plus primal-dual \citep{CharikarKMN01}. In (i) it can be shown that a constant approximate $k$-means clustering of a uniform sample of size $n / (2z)$ is a constant approximation for $k$-means clustering with outliers, under the assumption that every cluster has size $\Omega(z \log k)$. Here, the main idea is to avoid any noise by sampling conservatively. Although this assumption is reasonable as discussed before, the real issue is that conservative uniform sampling doesn't give a sufficiently accurate sketch to be adopted in practice. For example, if there are $1\%$ noise points, then the conservative uniform sample has only $50$ points. In (ii),  a more aggressive uniform sampling is used and followed by the primal dual \citep{CharikarKMN01}. It first obtains a uniform sample of size $\Theta(k (n / z) \log n)$; then the (expected) number of outliers in the sample becomes $\Theta(k \log n)$. This aggressive uniform sampling turns out to have very little loss in terms of accuracy.
However, as mentioned before, the
 primal-dual algorithm under-performs compared to other algorithms in speed and accuracy. 

%It first obtains a uniform sample of size $\Theta(k (n / z) \log n)$; then the (expected) number of outliers in the sample becomes $k \log n$. This aggressive uniform sampling turns out to have very little loss in terms of accuracy. However, the performance considerably degrades in the subsequent execution of the primal-dual algorithm \citep{CharikarKMN01}.

Another line of algorithmic development has been based on local search \citep{ChawlaG13,GuptaKLMV17}. The algorithm in \cite{ChawlaG13} guarantees the convergence to  a local optimum, but has no approximation guarantees. The other algorithm \citep{GuptaKLMV17} is an $O(1)$-approximation but theoretically it may end up with discarding $O(k z\log n)$ outliers.  These local search algorithms are considerably slower than our method and the theoretical guarantees require discarding many more points.

%While these algorithms are easy to implement, we believe they will unlikely be widely used in practice, as they have been. After all, the history shows that for $k$-means clustering, $k$-means++, a variant of Lloyd's, has been the popular algorithm of choice over local search based algorithms, which are typically slower. 

To summarize, there is a need for a  fast and effective algorithm for $k$-means clustering with outliers. 

\paragraph{Coresets for $k$-Means with Outliers:}

The other main contribution of our work is a coreset for $k$-means with outliers of size $O(k \log n)$ - independent of the number of outliers $z$ and dimension $d$.

The notion of coreset we consider is related to the concept of a \emph{weak coreset} in the literature - see e.g. \cite{feldman2011coreset} for discussion of weak coresets and other types of coresets. Previous coreset constructions (some for stronger notions of coreset) have polynomial dependence on the number of outliers $z$ \citep{GuptaKLMV17}, inverse polynomial dependence on the fraction of outliers $\frac{z}{n}$ \citep{meyerson2004k, huang2018coreset}, or polynomial dependence on the dimension $d$ \citep{huang2018coreset}. Thus, all coresets constructed in the previous work can have large size for some value of $z$, e.g. $z = \Theta(\sqrt n)$, or for large values of $d$. In contrast, our construction is efficient for \emph{all} values of $z \in [0,n]$ and yields coresets of size with no dependence on $d$ or $z$.

%Importantly, our construction is efficient for \emph{all} values of $z \in [0,n]$, while constructions with polynomial dependence on $z$ have large size when $ z = \Omega(n)$, so when the number of outliers is a constant fraction of $n$, and constructions with inverse polynomial dependence on $\frac{z}{n}$ have large size when $z = o(n)$.

\subsection{Overview of Our Algorithms: \algname~ and \framename} Our preprocessing step, \algname, is reminiscent of density-based clustering. Our algorithm tags an input point as light if it has relatively few points around it. Formally, a point is declared as light if it has less than $2z$ points within a certain distance threshold $r$, which can be set by binary search. Then a point is discarded if it only has light points within distance $r$. We emphasize that the threshold is chosen by the algorithm, not by the algorithm user, unlike in density-based clustering. While our preprocessing step looks similar to the algorithm for $k$-center clustering \citep{CharikarKMN01}, which optimizes the $\ell_\infty$-loss, we find it surprising that a similar idea can be used for $k$-means clustering. 

It can take considerable time to label each point light or not. To speed up our algorithm, we develop a new corest construction for $k$-means with outliers. The idea is relatively simple. We first use aggressive sampling as in \cite{meyerson2004k}. The resulting sample has size $O(\frac{kn}{z} \log n)$ and includes $O(k \log n)$ outliers with high probability. Then we use $k$-means++ to obtain $O(k \log n)$ centers. As a result, we obtain a high-quality coreset of size $O(k \log n)$. Interestingly, to our best knowledge, combining aggressive sampling with another coreset for $k$-means with outliers has not been considered in the literature. 

\subsection{Other Related Work}

Due to the vast literature on clustering, 
we refer the reader to \cite{charu,kogan2006grouping,jainsurvey}  for an overview and survey of the literature. $k$-means clustering can be generalized by considering other norms of loss, and such extensions have been studied under different names. When the objective is $\ell_1$-norm loss, the problem is called $k$-medians. The $k$-median and $k$-mean clustering problems are closely related, and in general the algorithm and analysis for one can be readily translated into one for the other with an $O(1)$ factor loss in the approximation ratio. Constant approximations are known for $k$-medians and $k$-means based on linear programming, primal-dual, and local search \citep{arya2004local,charikar2002constant,charikar1999improved}. While its approximation ratio is $O(\log k)$, the $k$-means++ algorithm is widely used in practice for $k$-means clustering due to its practical performance and simplicity. When the loss function is $\ell_\infty$, the problem is known as $k$-centers and a $3$-approximation is known for $k$-centers clustering with outliers \citep{CharikarKMN01}. For recent work on these outlier problems in distributed settings, see~\cite{MalkomesKCWM15,li2018distributed,guha2017distributed,chen2018practical}.

\section{Preliminaries}

%\todo{I don't know if we need to state our problem in the general metric context.}
%In this paper we will consider the $k$-means clustering problem in a metric space. A metric space is defined as follows:\\

%\begin{definition}[Metric Space]
%	A metric space is a pair $(M,d)$ such that $M$ is a set and $d: M \times M \rightarrow \mathbb{R}$ is a map satisfying:
%	\begin{enumerate}[1)]
%		\item $d(x,y) \geq 0$ for all $x,y \in M$, and {\color{blue}$d(x,y) = 0 \iff x = y$}
%		\item $d(x,y) = d(y,x)$ for all $x,y \in M$
%		\item $d(x,z) \leq d(x,y) + d(y,z)$ for all $x,y,z \in M$ \quad (triangle inequality)
%	\end{enumerate}
%\end{definition}

%For the remainder of the paper, let $(M,d)$ be a metric space. Then for any $x \in M$ and $S \subset M$, we define the notation:
%\[d(x,S) \defeq \min\limits_{y \in S} d(x,y)\]

In this paper we will consider the Euclidean $k$-means with outliers problem as defined in the introduction.
%
%\begin{definition}[$k$-Means with Outliers]
%	In the $k$-means with outliers problem, we are given as input a subset $X = \{x_1, \dots, x_n\} \subset M$, a parameter $k \in \mathbb{N}$ (number of centers), and a parameter $z \in \mathbb{N}$ (number of outliers.)
%	The goal is to choose a collection of $k$ \textbf{centers}, $C = \{c_1, \dots c_k\} \subset M$, to minimize:
%	\[\sum\limits_{x \in X_z(C)} d^2(x,C)\]
%	, where $X_z(C) \subset X$ is the subset of $n-z$ input points with the smallest squared distances to $C$.
%	
%	More formally, without loss of generality, we may assume $d^2(x_1, C) \leq d^2(x_2, C) \leq \dots \leq d^2(x_n, C)$, where ties are broken arbitrarily but consistently. Then $X_z(C) = \{x_1, \dots, x_{n-z}\}$.
%\end{definition}
%
%\todo{we should remove this def, as it was introduced in intro.}
%
Note that the $\ell_2$-distance satisfies the \emph{triangle inequality}, so for all $x,y,z \in \mathbb{R}^d$, $d(x,z) \leq d(x,y) + d(y,z)$.
Further, the \emph{approximate triangle inequality} will be useful to our analyses (this follows from the triangle inequality):
%\[d^2(x,z) \leq 2d^2(x,y) + 2d^2(y,z) \quad \forall x,y,z \in \mathbb{R}^d\] % before compression
$d^2(x,z) \leq 2d^2(x,y) + 2d^2(y,z) \quad \forall x,y,z \in \mathbb{R}^d$.
Given a set of centers $C \subset \mathbb{R}^d$, we say that the \emph{assignment cost} of $x \in X$ to $C$ is $d^2(x, C)$. For $k$-means with outliers, a set, $C$, of $k$ centers naturally defines a clustering of the input points $X$ as follows:
\begin{definition}[Clustering]
	Let $C = \{c_1, \dots, c_k\} \subset \mathbb{R}^d$ be a set of $k$ centers. A clustering of $X$ defined by $C$ is a partition $C_1 \cup \dots \cup C_k$ of $X_z(C)$ satisfying:
	For all $x \in X_z$ and $c_i \in C$, $x \in C_i \iff d(x, C) = d(x, c_i)$, where ties between $c_i$'s are broken arbitrarily but consistently.
\end{definition}

In summary, for the $k$-means with outliers problem, given a set $C$ of $k$ centers, we assign each point in $X$ to its closest center in $C$. Then we exclude the $z$ points of $X$ with the highest assignment cost from the objective function (these points are our outliers.) This procedure defines a clustering of $X$ with outliers.

%\subsection{Notations} % before compression
\smallskip
\noindent
\textbf{Notations:}
For $n \in \mathbb{N}$, we define $[n] \defeq \{1, \dots, n\}$. Recall that as in the introduction, for any finite $Y \subset \mathbb{R}^d, x \in \mathbb{R}^d$, we define:
%\[d(x,Y) \defeq \max\limits_{y \in Y}d(x,y)\] % before compression
$d(x,Y) \defeq \min_{y \in Y}d(x,y)$.
For any $x \in \mathbb{R}^d,X \subseteq \mathbb{R}^d, r > 0$, we define the $X$-ball centered at $x$ with radius $r$ by
%\[B(x,r) \defeq \{y \in X \mid d(x,y) \leq r \}\] % before compression
$B(x,r) \defeq \{y \in X \mid d(x,y) \leq r \}$.
For a set of $k$ centers, $C \subset \mathbb{R}^d$, and $z \in \mathbb{N}$, we define the \emph{$z$-cost} of $C$ by 
%\[f_z^X(C) \defeq \sum\limits_{x \in X_z(C)} d^2(x,C).\]  % before compression
$f_z^X(C) \defeq \sum_{x \in X_z(C)} d^2(x,C).$
Recall that we define $X_z(C) \subset X$ to be the subset of points of $X$ excluding the $z$ points with highest assignment costs. Thus the $z$-cost of $C$ is the cost of clustering $X$ with $C$ while excluding the $z$ points with highest assignment costs.
As shorthand, when $z = 0$ -- so when we consider the $k$-means problem without outliers -- we will denote the $0$-cost of clustering $X$ with $C$ by $f^X(C) \defeq f^X_0(C)$.
Further, we say a set of $k$ centers $C^*$ is an \emph{optimal $z$-solution} if it minimizes $f_z^X(C)$ over all choices of $k$ centers, $C$. Then we define $Opt(X,k,z) \defeq f_z^X(C^*)$ to be the optimal objective value of the $k$-means with outliers instance $(X,k,z)$.
Analogously, for the $k$-means without outliers problem, we denote the optimal objective value of the $k$-means instance $(X,k)$ by $Opt(X,k)$.

\section{\algname \space Algorithm}

In this section, we will describe our  algorithm, $\algname$, which turns a $k$-means algorithm without outliers
to an algorithm for $k$-means with outliers in a black box fashion. We note that the algorithm naturally extends to $k$-medians with outliers and general metric spaces.
For the remainder of this section, let $X = \{x_1, \dots, x_n\} \subset \mathbb{R}^d$, $k \in \mathbb{N}$ , and $z \in \mathbb{N}$ define an instance of $k$-means with outliers.

\noindent\textbf{Algorithm Intuition:} The guiding intuition behind our algorithm is as follows:
We consider a ball of radius $r > 0$ around each point $x \in X$. If this ball contains many points, then $x$ is likely not to be an outlier in the optimal solution.

More concretely, if there are more than $2z$ points in $x$'s ball, then at most $z$ of these points can be outliers in the optimal solution. This means that the majority of $x$'s neighbourhood is real points in the optimal solution, so we can bound the assignment cost of $x$ to the optimal centers. We call such points \textit{heavy}.

There are $2$ main steps to our algorithm. First, we use the concept of heavy points to decide which points are real points and those that are outliers. Then we run a $k$-means approximation algorithm on the real points.

\noindent\textbf{Formal Algorithm:} Now we formally describe our algorithm $\algname$. As input, $\algname$ takes a $k$-means with outliers instance $(X,k,z)$ and an algorithm for $k$-mean without outliers, $\mathcal{A}$, where $\mathcal{A}$ takes an instance of $k$-means as input.

We will prove that if $\mathcal{A}$ is an $O(1)$-approximation for $k$-means and the optimal clusters are sufficiently large with respect to $z$, then $\algname$ outputs a good clustering that discards $O(z)$ outliers.
More precisely, we will prove the following theorem about the performance of $\algname$:
\begin{theorem}\label{algmain}
	Let $C$ be the output of $\algname(X, k ,z, \mathcal{A})$. Suppose that $\mathcal{A}$ is an $\alpha$-approximation for $k$-means. If every cluster in the clustering defined by $C^*$ has size at least $3z$, then $f_{2z}^X(C) \leq 9\alpha \cdot Opt(X,k,z)$.
\end{theorem}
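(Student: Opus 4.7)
The plan is to show (i) the filtering step of \algname{} discards at most $2z$ points (call this set $D$, and write $X' = X \setminus D$), and (ii) on the retained set $X'$ the same optimal centers $C^*$ of the outlier instance already achieve cost at most $9 \cdot OPT$; combining these with the $\alpha$-approximation of $\mathcal{A}$ on the instance $(X',k)$ and the easy observation $f_{2z}^X(C) \le f^{X'}(C)$ (which holds because $|D|\le 2z$, so padding $D$ to any set $D'$ of size $2z$ yields $X \setminus D' \subseteq X'$) finishes the proof. Throughout I fix an optimal $z$-solution $C^* = \{c_1^*,\dots,c_k^*\}$ with clusters $C_1^*,\dots,C_k^*$ (each $|C_i^*|\ge 3z$), outlier set $O^*$ of size $z$, write $X^o = X \setminus O^*$, $OPT = Opt(X,k,z)$, and $OPT_i = \sum_{x \in C_i^*} d^2(x,c_i^*)$.

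For the discard bound I would analyze a radius $r$ of order $r^2 = \Theta(OPT/z)$. By Markov's inequality the \emph{core} $S_i := C_i^* \cap B(c_i^*, r/2)$ satisfies $|C_i^* \setminus S_i| \le 4\,OPT_i/r^2$, so tuning the constant in $r^2$ makes each $|S_i| > 2z$ while simultaneously $\sum_i |C_i^* \setminus S_i| \le z$. Any two points of $S_i$ lie within $r$ of each other, so every $x \in S_i$ has $|B(x,r)| > 2z$ and is therefore heavy; hence every core non-outlier survives the filtering. The only points that can be discarded are the at most $z$ non-core non-outliers plus the at most $z$ optimal outliers, giving $|D|\le 2z$. (The algorithm actually binary-searches $r$ over candidate distances, so it at least matches this particular choice.) To bound $f^{X'}(C^*)$, I split it into $\sum_{x \in X' \cap X^o} d^2(x,C^*) + \sum_{x \in X' \cap O^*} d^2(x,C^*)$; the first piece is trivially $\le OPT$, so the real work is to bound the kept-outlier piece by $8\,OPT$.

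For that second piece, every $x \in X' \cap O^*$ has a heavy witness $y_x \in B(x,r)$, and since $|B(y_x,r)|>2z$ and $|O^*|=z$, more than $z$ non-outliers sit in $B(y_x,r)$. Moreover, by Markov applied to $OPT$, the number of non-outliers at distance $>r$ from their own optimal center is at most $OPT/r^2 \le z$, so for the right constants at least one $p_x \in B(y_x,r) \cap X^o$ also satisfies $d(p_x, C^*) \le r$. The triangle inequality then gives $d(x,C^*) \le d(x,p_x) + d(p_x, C^*) \le 2r + r = 3r$, hence $d^2(x,C^*) \le 9r^2$; summing over the at most $z$ kept outliers yields $\sum d^2(x,C^*) \le 9zr^2 = O(OPT)$. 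With the constants balanced this gives $f^{X'}(C^*) \le 9\,OPT$, so $f_{2z}^X(C) \le f^{X'}(C) \le \alpha \, f^{X'}(C^*) \le 9\alpha \, OPT$. The main obstacle will be exactly this balance: larger $r$ is needed to guarantee the discard count stays at $2z$ (so that cores remain heavy), while smaller $r$ is needed to keep the per-outlier charge $9r^2$ small. Finding a single $r$ of order $\sqrt{OPT/z}$ achieving both bounds simultaneously with constant $9$ is the delicate step; a looser charge (e.g.\ a Hall-style injective matching of kept outliers to arbitrary nearby non-outliers via approximate triangle inequality) would give only a larger constant, which is why the proof should exploit the \emph{outer-core} non-outlier $p_x$ as above.
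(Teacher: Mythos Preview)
Your overall strategy matches the paper's proof exactly: bound the discarded set by $|Y| \leq 2z$, then bound $f^{X \setminus Y}(C^*)$ by splitting into non-outliers (trivially $\leq OPT$) and the at most $z$ surviving optimal outliers, and finally chain $f_{2z}^X(C) \leq f^{X\setminus Y}(C) \leq \alpha \cdot Opt(X\setminus Y,k) \leq \alpha f^{X\setminus Y}(C^*)$. One small correction: the algorithm does not binary-search $r$ over candidate distances; it fixes $r = 2(OPT/z)^{1/2}$ given a guess of $OPT$, so your analysis should simply plug in this $r$ (i.e., $c = 4$ in your notation), which makes all your Markov bounds go through as written.

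The one place where your argument and the paper's genuinely differ is in certifying that close-to-center points survive. You show that every point of the core $S_i = C_i^* \cap B(c_i^*, r/2)$ is itself heavy, because $S_i$ has diameter at most $r$ and $|S_i| \geq 2z$. The paper instead isolates a \emph{single} heavy witness per cluster: the input point $x(c^*)$ nearest to $c^*$. It proves $x(c^*)$ is heavy (this is where the $3z$ assumption enters), and then observes that any $x$ with $d(x,C^*) \leq (OPT/z)^{1/2}$ has $x(c^*) \in B(x,r)$. Both arguments reach the same conclusion: only points at distance greater than $(OPT/z)^{1/2}=r/2$ from $C^*$ can land in $Y$, and there are at most $2z$ of those. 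For the surviving-outlier piece, the paper's key lemma (``heavy $h$ implies $d(h,C^*) \leq 2r$'') is exactly your $p_x$ argument phrased as a contradiction; then triangle inequality gives $d(x,C^*) \leq 3r$ for any kept $x$ with a heavy witness in $B(x,r)$.

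On the constant $9$: your worry is legitimate, and in fact neither your computation nor the paper's printed proof cleanly produces $9$. The paper asserts $d^2(x,C^*) \leq 4r^2$ for each kept outlier and then writes $z \cdot 4r^2 = 8\,OPT$; but the lemma it invokes only bounds the heavy witness $h$ by $d(h,C^*) \leq 2r$, not $x$ itself, and moreover with $r^2 = 4\,OPT/z$ one has $z\cdot 4r^2 = 16\,OPT$, not $8\,OPT$. The bound that actually follows from the lemma plus triangle inequality is $d(x,C^*) \leq 3r$, giving $z\cdot 9r^2 = 36\,OPT$ and hence $f^{X\setminus Y}(C^*) \leq 37\,OPT$, identical to what your approach yields. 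So your plan is correct and essentially equivalent to the paper's; the specific constant $9$ appears to be a slip in the paper rather than a balance you failed to strike.
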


\begin{corollary}
		Let $C$ be the output of $\algname(X, k ,z, \mathcal{A})$. Suppose that $\mathcal{A}$ is an $\alpha$-approximation. Then $f_{3kz + 2z}^X(C) \leq 9\alpha \cdot Opt(X,k,z)$.
\end{corollary}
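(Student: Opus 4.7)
The plan is to reduce to Theorem~\ref{algmain} by localizing the role of the large-cluster hypothesis in its proof; I anticipate that this hypothesis is used solely to bound how many points the preprocessing phase of $\algname$ discards, while the subsequent approximation guarantee on the surviving set goes through unchanged. Absorbing the failure of the hypothesis into the extra outlier budget should yield the corollary.

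Fix an optimal $z$-solution $C^*$ with clusters $C_1^*, \ldots, C_k^*$, call a cluster \emph{small} when $|C_i^*| < 3z$, and write $S$ for the union of the small clusters. Since there are at most $k$ small clusters, $|S| \leq 3kz$. I would then revisit the proof of Theorem~\ref{algmain}, which produces a discard set $D \subseteq X$ of points eliminated during preprocessing and runs $\mathcal{A}$ on $X \setminus D$ to obtain the output centers $C$. Two ingredients should emerge: an approximation bound of the form $f^{X \setminus D}(C) \leq 9\alpha \cdot Opt(X,k,z)$, proved by charging every surviving non-outlier to the corresponding optimal center through the approximate triangle inequality and handling any surviving optimal outlier by jumping to a nearby non-outlier guaranteed by the heavy-point condition, and a discard bound $|D| \leq 2z$ that does invoke $|C_i^*| \geq 3z$ in order to pair every discarded point with a unique optimal outlier. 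Only the second of these uses cluster sizes.

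My plan is to relax the discard bound to $|D| \leq 2z + |S| \leq 3kz + 2z$ by revisiting the pairing argument. A discarded point has no heavy point in its $r$-ball, where heavy means the ball contains at least $2z$ input points. Any point in a big cluster $C_i^*$ automatically has its $r$-ball dominated by the $|C_i^*| \geq 3z$ members of that cluster (fewer than $z$ of which can be optimal outliers), so the original pairing continues to apply to any discarded point of $X \setminus S$ and yields at most $2z$ such discards. The remaining potential discards are confined to $S$ and are bounded trivially by $|S| \leq 3kz$. To finish, pick any superset $D' \supseteq D$ with $|D'| = 3kz + 2z$; since $f_{3kz+2z}^X(C)$ is the minimum cost over all discards of that size and $X \setminus D' \subseteq X \setminus D$,
\[
f_{3kz+2z}^X(C) \leq f^{X \setminus D'}(C) \leq f^{X \setminus D}(C) \leq 9\alpha \cdot Opt(X,k,z).
\]

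The principal obstacle is making the refined discard bound quantitatively precise: I need to confirm that the local pairing used in the proof of Theorem~\ref{algmain}, which matches each discarded point with an optimal outlier within a radius comparable to $r$, continues to function when restricted to big clusters. Concretely, this amounts to checking that the $2z$-threshold defining heaviness interacts correctly with the $3z$ lower bound on big-cluster sizes under the radius $r$ chosen by the algorithm's binary search. This is essentially the one step where the details of the original proof must be reused rather than cited as a black box; once it is in hand, the rest of the corollary is the elementary inequality chain above.
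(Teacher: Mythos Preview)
Your overall plan is exactly the paper's intended route: Lemma~\ref{numoutliers} already states, without any cluster-size hypothesis, that the discard set satisfies $|Y| \leq 3z \cdot \#\{\text{optimal clusters of size} < 3z\} + 2z \leq 3kz + 2z$, and the second half of the proof of Theorem~\ref{algmain} (bounding $f^{X\setminus Y}(C^*) \leq 9 \cdot Opt$ via the heavy-point property, Lemma~\ref{heavy}) never uses cluster sizes. Replacing the appeal to Corollary~\ref{corlarge} by Lemma~\ref{numoutliers} in that proof yields the corollary immediately, just as you outline.

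One correction to your speculative mechanism, since it could send you down the wrong path: it is \emph{not} true that ``any point in a big cluster $C_i^*$ automatically has its $r$-ball dominated by the $|C_i^*|\geq 3z$ members of that cluster'', nor is the discard bound proved by pairing discarded points with optimal outliers. A point assigned to a big cluster can sit far from its center and have an arbitrarily sparse $r$-ball. The paper's actual argument in Lemma~\ref{numoutliers} instead splits points by distance to $C^*$: any point within $(Opt/z)^{1/2}$ of a big-cluster center $c^*$ has the heavy point $x(c^*)$ in its $r$-ball (this is Lemma~\ref{assume}, and it is the only place the $3z$ threshold enters) and hence is never discarded; points farther than $(Opt/z)^{1/2}$ from every center number at most $2z$ by a direct cost-counting argument. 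Thus the discard set is contained in the small clusters together with these at most $2z$ far points, which is where the $3kz+2z$ bound comes from.
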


In other words, \algname\space gives a pseudo-approximation-preserving reduction from $k$-means with outliers to $k$-means, where any $\alpha$ approximation for $k$-means implies a $9\alpha$ pseudo-approximation for $k$-means with outliers that throws away $3kz + 2z$ points as outliers.

%\begin{wrapfigure}{r}{0.50\textwidth}
%\begin{minipage}{0.50\textwidth}
\begin{figure}
\begin{algorithm}[H]
	\caption{for $k$-means with outliers} \label{alg:nk}
	$\algname(X, k, z, \mathcal{A})$
	\begin{algorithmic}[1]
		\STATE Suppose we know the optimal objective value $Opt \defeq Opt(X,k,z)$
		\STATE Initialize $r \gets 2(Opt/z)^{1/2}$, $Y \gets \emptyset$
		\FOR {each $x \in X$}\label{computeball}
			\STATE Compute $B(x,r)$
			\IF {$\lvert B(x,r) \rvert \geq 2z$}
				\STATE Mark $x$ as \textit{heavy}
			\ENDIF			
		\ENDFOR	
		\FOR {each $x \in X$}\label{computeheavy}
			\IF {$B(x,r)$ contains no heavy points}
				\STATE Update $Y \gets Y \cup \{x\}$
			\ENDIF
		\ENDFOR
		\STATE Output $C \gets \mathcal{A}(X \setminus Y, k)$
	\end{algorithmic}
\end{algorithm}
\end{figure}
%\end{minipage}
%\end{wrapfigure}

\subsection{Implementation Details}

Here we describe a simple implementation of $\algname$ that achieves runtime $O(n^2 d) + T(n)$ assuming we know the optimal objective value, $Opt$, where $T(n)$ is the runtime of the algorithm $\mathcal{A}$ on inputs of size $n$. This assumption can be removed by running that algorithm for many guesses of $Opt$, say by trying all powers of $2$ to obtain a $2$-approximation of $Opt$ for the correct guess.

For our experiments, we implement the loop in Line \ref{computeball} by enumerating over all pairs of points and computing their distance. This step takes time $O(n^2 d)$. We implement the loop in Line \ref{computeheavy} by enumerating over all elements in $B(x,r)$ and checking if it is heavy for each $x \in X$. This step takes $O(n^2)$. Running $\mathcal{A}$ on $(X \setminus Y, k)$ takes $T(n)$ time. We summarize the result of this section in the following lemma:

\begin{lemma}\label{lemruntime}
    Assuming that we know $Opt$ and that $\mathcal{A}$ takes time $T(n)$ on inputs of size $n$, then $\algname$ can be implemented to run in time $O(n^2 d) + T(n)$.
\end{lemma}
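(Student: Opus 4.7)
The plan is to bound the runtime of each stage of \algname~separately and then sum them, exploiting the fact that the stages are executed sequentially.

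First, I would analyze the loop at Line \ref{computeball}. For each of the $n$ points $x$, computing $B(x,r)$ requires evaluating $d(x,y)$ for every $y \in X$, and each such Euclidean distance in $\mathbb{R}^d$ takes $O(d)$ time. While iterating through these distances, we can simultaneously count how many $y$ satisfy $d(x,y) \leq r$, so determining $|B(x,r)|$ and marking $x$ as heavy when $|B(x,r)| \geq 2z$ comes for free. This yields a total of $O(n^2 d)$ time for the first loop, and as a by-product we have stored $B(x,r)$ (as a list of at most $n$ indices) along with the heavy/light label for every point.

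Next, I would bound the loop at Line \ref{computeheavy}. For each $x \in X$, we iterate through the precomputed list $B(x,r)$ and check whether any element is heavy. Since heaviness has already been determined and stored in the previous stage, each membership-plus-label check costs $O(1)$, and $|B(x,r)| \leq n$, giving $O(n)$ work per point. Summed over all $x$, this is $O(n^2)$, which is dominated by the $O(n^2 d)$ cost above. The set $Y$ is built incrementally, adding at most $O(1)$ work per iteration.

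Finally, the call $\mathcal{A}(X \setminus Y, k)$ runs on an input of size at most $n$, so by assumption it takes at most $T(n)$ time. Adding the three contributions gives $O(n^2 d) + O(n^2) + T(n) = O(n^2 d) + T(n)$, which is the claimed bound. There is no real obstacle here beyond carefully pointing out that the $B(x,r)$ sets and heavy labels computed in the first pass can be reused in the second pass; without this reuse, a naive implementation of the second loop would recompute distances and blow up the bound by a factor of $d$.
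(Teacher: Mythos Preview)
Your proposal is correct and follows essentially the same approach as the paper: bound the first loop by enumerating all pairs of points in $O(n^2 d)$, bound the second loop by scanning each stored $B(x,r)$ against the precomputed heavy labels in $O(n^2)$, and add $T(n)$ for the call to $\mathcal{A}$. Your write-up is slightly more explicit about storing and reusing the $B(x,r)$ lists and heavy labels, but the argument is identical in substance.
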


%\section{Framework for Big Data}

%\section{Coreset of Near Linear Size in $k$: Independent of the Number of Outliers}
\section{Coreset of Near Linear Size in k}

In this section we develop a general framework to speed up any $k$-means with outliers algorithm, and we apply this framework to\space \algname \space to show that we can achieve near-linear runtime. In particular, we achieve this by constructing what is called a \textit{coreset} for the $k$-means with outliers problem of size $O(k \log n)$, which is \emph{independent} of the number of outliers, $z$.

\subsection{Coresets for k-Means with Outliers}\label{sec:coreset-def}

Our coreset construction will leverage existing constructions of coresets for $k$-means with outliers. A coreset gives a good summary of the input instance in the following sense:

\begin{definition}[Coreset for $k$-Means with Outliers]\footnote{Note that our definition of coreset is parameterized by the \emph{number} of outliers, $z$, in contrast to previous work such as \cite{meyerson2004k} and \cite{huang2018coreset}, whose constructions are parametereized by the \emph{fraction} of outliers, $z/n$.}
	Let $(X,k, z)$ be an instance of $k$-means with outliers and $Y$ be a (possibly weighted) subset of $\mathbb{R}^d$. We say the $k$-means with outliers instance $(Y,k, z')$ is an $(\alpha, \beta)$-coreset for $X$ if for any set $C \subset \mathbb{R}^d$ of $k$-centers satisfying $f_{\kappa_1 z'}^Y(C) \leq \kappa_2 Opt(Y,k,z')$ for some $\kappa_1, \kappa_2 > 0$, we have $f_{\alpha \kappa_1 z}^X(C) \leq \beta \kappa_2 Opt(X,k,z)$.
\end{definition}

In words, if $(Y,k,z')$ is an $(\alpha, \beta)$ coreset for $(X,k,z)$, then running any $(\kappa_1, \kappa_2)$-approximate $k$-means with outliers algorithm on $(Y,k,z')$ (meaning the algorithm throws away $\kappa_1 z'$ outliers and outputs a solution with cost at most $\kappa_2 Opt(Y,k,z')$) gives a $(\alpha \kappa_1, \beta \kappa_2)$-approximate solution to $(X,k,z)$.

Note that if $Y$ is a weighted set with weights $w: Y \rightarrow \mathbb{R}_+$, then the $k$-means with outliers problem is analogously defined, where the objective is a weighted sum of assignment costs: $\min\limits_C \sum_{y \in Y_z(C)} w(y) d^2(y,C)$. Further, note that $\algname$ generalizes naturally to weighted $k$-means with outliers with the same guarantees.

%One should imagine that a coreset gives a good summary of the input instance. In fact, this summary is good enough such that it suffices to (approximately) solve the $k$-means with outliers problem on the coreset as opposed to the input set.

%The parameter $\kappa_1$ controls the approximation ratio of the algorithm used on the coreset and $\kappa_2$ controls the  factor outliers the algorithm needs to discard. %Note that $z' \leq z$ is used on the coreset because the number of outliers in the coreset maybe less than the original input.

The two coresets we will utilize for our construction are $\km$ \citep{AggarwalDK09} and Meyerson's sampling coreset \citep{meyerson2004k}. The guarantees of these coresets are as follows:

\begin{theorem}[\km]\label{kmcore}
	Let $\km(X,k)$ denote running $\km$ on input points $X$ to obtain a set $Y \subset X$ of size $k$. Further, let $Y_1, \dots Y_k$ be the clustering of $X$ with centers $y_1, \dots, y_k \in Y$, respectively. We define a weight function $w: Y \rightarrow \mathbb{R}_+$ by	
	$w(y_i) = \lvert Y_i \rvert$ for all $y_i \in Y$. Suppose $Y = \km(X, 32(k+z))$. Then with probability at least $0.03$, the instance $(Y, k, z)$ where $Y$ has weights $w$ is an $(1,124)$-coreset for the $k$-means with outliers instance $(X,k,z)$.
\end{theorem}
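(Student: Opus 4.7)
My plan is to prove the $(1,124)$-coreset property in two main steps: first, use the known bi-criteria guarantee of \km{} with many seeds to bound $f^X(Y)$ against $Opt(X,k,z)$, and second, prove an outlier-aware approximate triangle inequality that transfers clustering costs between the weighted set $Y$ and the original set $X$.

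The first step is to show that $f^X(Y) \leq C_0 \cdot Opt(X,k,z)$ with probability at least $0.03$ for some absolute constant $C_0$. Viewing $k+z$ as the target number of clusters, the bi-criteria result of \cite{AggarwalDK09} states that running \km{} with $\beta k'$ seeds yields an unweighted cost within a constant factor of $Opt(X, k')$ with constant probability, for $\beta$ a sufficiently large universal constant. Applied with $\beta=32$ and $k' = k+z$, this gives $f^X(Y) \leq C_0 \cdot Opt(X, k+z)$. Since any $(k,z)$-outlier solution extends to a $(k+z)$-center solution of no greater cost by placing $z$ extra centers on the removed outliers, $Opt(X,k+z) \leq Opt(X,k,z)$, so $f^X(Y) \leq C_0 \cdot Opt(X,k,z)$.

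The second and main step is to establish the key transfer inequality: for any set $C$ of $k$ centers,
\[
f_{\kappa_1 z}^X(C) \leq 2 f^X(Y) + 2 f_{\kappa_1 z}^Y(C).
\]
For each $x \in X$ let $y(x) \in Y$ denote its closest point in $Y$, so $f^X(Y) = \sum_x d^2(x,y(x))$ and $w(y_i) = |Y_i|$ where $Y_i = \{x : y(x) = y_i\}$. Let $\{t^*_i\}$ be the optimal weight-$\kappa_1 z$ trimming of $Y$ for $C$, so $f_{\kappa_1 z}^Y(C) = \sum_i (w(y_i) - t^*_i)\, d^2(y_i, C)$ with $\sum_i t^*_i = \kappa_1 z$ and $t^*_i \leq w(y_i)$. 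Construct an $X$-trimming $T_X$ of size $\kappa_1 z$ by selecting, for each $i$, any $t^*_i$ points from $Y_i$ (feasible since $t^*_i \leq |Y_i|$). Because the optimal $X$-trimming is at least as cheap as $T_X$, applying the approximate triangle inequality $d^2(x,C) \leq 2d^2(x,y(x)) + 2d^2(y(x),C)$ to each $x \in X \setminus T_X$ and summing gives the claim, since by construction $\sum_{x \in X \setminus T_X} d^2(y(x), C) = \sum_i (w(y_i) - t^*_i)\, d^2(y_i, C) = f_{\kappa_1 z}^Y(C)$.

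Applying the same trimming argument with $C^*$ the optimal $(k,z)$-outlier solution on $X$ gives $Opt(Y,k,z) \leq f_z^Y(C^*) \leq 2 f^X(Y) + 2\, Opt(X,k,z)$. Combining with the hypothesis $f^Y_{\kappa_1 z}(C) \leq \kappa_2\, Opt(Y,k,z)$ and the first step's bound yields $f^X_{\kappa_1 z}(C) \leq [(2 + 4\kappa_2)C_0 + 4\kappa_2] \cdot Opt(X,k,z)$, which collapses to $\leq 124\kappa_2\, Opt(X,k,z)$ once $C_0$ is chosen small enough (the $32(k+z)$-seed bi-criteria constant suffices, and $\kappa_2 \geq 1$ for any meaningful approximation). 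The main obstacle will be the outlier trimming itself: the naive triangle inequality ignores which high-cost points are discarded, whereas $f_z$ always removes the most expensive ones. The key insight is to push the optimal $Y$-trimming to the $X$-side via the cluster decomposition $Y_i$; the constraint $t^*_i \leq w(y_i) = |Y_i|$ guarantees a mirror $X$-trimming exists, so the triangle inequality applies to the surviving points while preserving the trimmed-cost interpretation on both sides.
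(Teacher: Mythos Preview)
Your proposal is correct and follows the same route as the paper: invoke the bi-criteria guarantee of \cite{AggarwalDK09} (the paper uses $C_0=20$) to get $f^X(Y)\le 20\cdot Opt(X,k,z)$, prove the transfer inequality $f^X_{\kappa_1 z}(C)\le 2f^X(Y)+2f^Y_{\kappa_1 z}(C)$ and its reverse $Opt(Y,k,z)\le 2f^X(Y)+2\,Opt(X,k,z)$, and combine to reach $(40+84\kappa_2)\le 124\kappa_2$. The only mechanical difference is in how the transfer inequality is argued: the paper works with the optimal $X$-trimming $X_{\kappa_1 z}(C)$, splits it by whether $y(x)\in Y_{\kappa_1 z}(C)$, and uses a bijection to swap each mismatched $x$ with some $x'\notin X_{\kappa_1 z}(C)$ whose $y(x')$ survives; you instead push the optimal $Y$-trimming down to a feasible $X$-trimming via the clusters $Y_i$ and use that any $X$-trimming upper bounds the optimal one---both yield the identical bound, and your version avoids the case split.
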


\begin{theorem}[Sampling]\label{samplecore}
	Let $S$ be a sample from $X$, where every $x \in X$ is included in $S$  independently with probability $p = \max(\frac{36}{z} \log (\frac{4nk^2}{z}), 36 \frac{k}{z} \log (2k^3))$.
	Then with probability at least $1 - \frac{1}{k^2}$, the instance $(S,k, 2.5 pz)$ is a $(16,29)$-coreset for $(X,k,z)$.
\end{theorem}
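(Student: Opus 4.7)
The plan is to establish the coreset property in two stages. First, I will show that the optimal $z$-solution $C^*$ for $(X,k,z)$, viewed as a candidate for the sampled instance $(S,k,2.5pz)$, has cost at most $O(p)\cdot Opt(X,k,z)$, which in turn upper-bounds $Opt(S,k,2.5pz)$. Second, I will show that any $C$ satisfying the hypothesis $f_{\kappa_1\cdot 2.5pz}^S(C)\leq \kappa_2\,Opt(S,k,2.5pz)$ also obeys $f_{16\kappa_1 z}^X(C)\leq O(1/p)\cdot f_{\kappa_1\cdot 2.5pz}^S(C)$. Chaining the two bounds and tracking constants will yield the stated $(16,29)$-coreset guarantee.

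For the first stage, let $Z^*\subset X$ be the set of $z$ outliers discarded by $C^*$. Because each point of $X$ enters $S$ independently with probability $p$, a Chernoff bound on the $z$ independent Bernoulli$(p)$ indicators $\{\mathbf{1}[x\in S] : x\in Z^*\}$ gives $|S\cap Z^*|\leq 2.5pz$ with probability $1-\exp(-\Omega(pz))$, so $C^*$ is a feasible $2.5pz$-outlier solution on $S$. Its expected cost on $S\cap(X\setminus Z^*)$ equals $p\cdot Opt(X,k,z)$, and a Chernoff-style concentration bound (which the choice $p=\Omega(\tfrac{1}{z}\log(nk^2/z))$ is tuned to support) pins the realized cost at $O(p)\cdot Opt(X,k,z)$ with high probability, yielding $Opt(S,k,2.5pz)\leq O(p)\cdot Opt(X,k,z)$.

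For the second stage, fix any candidate set $C$ of $k$ centers and let $X^{\mathrm{top}}\subseteq X$ be the $16\kappa_1 z$ points of $X$ with the largest assignment cost to $C$, so that $f_{16\kappa_1 z}^X(C)=\sum_{x\in X\setminus X^{\mathrm{top}}}d^2(x,C)$. Since $\mathbb{E}[|S\cap X^{\mathrm{top}}|]=16\kappa_1 pz$ dominates $2.5\kappa_1 pz$, Chernoff guarantees $|S\cap X^{\mathrm{top}}|\geq 2.5\kappa_1 pz$ with high probability, which forces the $2.5\kappa_1 pz$ points discarded as outliers of $C$ on $S$ to lie entirely inside $S\cap X^{\mathrm{top}}$. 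Hence $f_{\kappa_1\cdot 2.5pz}^S(C)\geq \sum_{x\in S\setminus X^{\mathrm{top}}}d^2(x,C)$, whose expectation is $p\cdot f_{16\kappa_1 z}^X(C)$; a further Chernoff bound then delivers the multiplicative lower bound $f_{\kappa_1\cdot 2.5pz}^S(C)\geq\Omega(p)\cdot f_{16\kappa_1 z}^X(C)$.

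The main obstacle is that the second stage's concentration must hold simultaneously for every $C$ a downstream algorithm could return, since the returned $C$ depends on $S$. The plan is a covering argument: up to a constant-factor distortion permitted by the approximate triangle inequality, one can restrict attention to center sets drawn from a discretized family such as all $k$-subsets of $X$, of which there are at most $n^k$. Union-bounding the Chernoff failures of both stages over this family (and over the relevant outlier thresholds) forces $p=\Omega(\tfrac{k\log n}{z})$, which is precisely the content of the two terms $\tfrac{36}{z}\log\tfrac{4nk^2}{z}$ and $\tfrac{36k}{z}\log(2k^3)$ in the definition of $p$, and keeps the total failure probability at most $1/k^2$ as claimed. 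Combining the two stages yields $f_{16\kappa_1 z}^X(C)\leq O(1/p)\cdot f_{\kappa_1\cdot 2.5pz}^S(C)\leq O(1)\cdot\kappa_2\,Opt(X,k,z)$, and carrying the explicit constants through the Chernoff parameters is what produces the final $29\kappa_2$ coefficient in the claim.
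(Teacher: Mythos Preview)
Your two-stage outline captures the right intuition, but the execution has two genuine gaps that the paper's argument specifically avoids.

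First, your union bound over the $n^k$ candidate center sets does \emph{not} match the stated sampling rate. A union bound over $n^k$ events with Chernoff failure $\exp(-\Omega(pz))$ requires $pz=\Omega(k\log n)$. But the two terms defining $p$ give only $pz=\Theta\bigl(\max(\log(nk^2/z),\,k\log k)\bigr)$; neither term is $\Omega(k\log n)$, so your claim that ``this is precisely the content of the two terms'' is incorrect. With the stated $p$, a union bound over all $k$-subsets of $X$ simply fails.

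Second, both stages invoke ``Chernoff-style concentration'' on sampled cost sums such as $\sum_{x\in X\setminus Z^*}\mathbf{1}[x\in S]\,d^2(x,C^*)$ or $\sum_{x\in S\setminus X^{\mathrm{top}}}d^2(x,C)$. These are sums of independent but \emph{unbounded}-weight indicators: a single point can carry a constant fraction of the total cost, so neither upper nor lower Chernoff tails are available at the required probability without additional structure.

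The paper sidesteps both issues by never applying concentration to cost sums and never conditioning on $C$. It fixes a \emph{bin division} of $X$ into $n/z$ equal-size bins ordered by $d^2(\cdot,C^*)$ and applies Chernoff only to the \emph{counts} $|S\cap B_i|$ and $|S\cap C_i^*|$ for large optimal clusters; these are $O(n/z+k)$ events depending solely on $C^*$, and the two terms in $p$ are calibrated exactly to this union bound. Once these count events hold, the cost comparison for an \emph{arbitrary} $C$ is done deterministically: a covered/uncovered cluster argument bounds $|X\setminus X'|$ by $(9+7\kappa_1)z$, a moving argument routes $f^{X'}(C)$ through $C^*$, and a bin-shifting charge (each $S\cap B_i$ charged to $B_{i+1}$) converts $f^S_{2.5pz}(C^*)$ to $O(p)\,f^X_z(C^*)$ without any concentration on costs. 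You should restructure your proof along these lines rather than attempting a per-$C$ Chernoff plus net argument.
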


Observe that $\km$ gives a coreset of size $O(k+z)$, and uniform sampling gives a coreset of size $O(\frac{kn}{z} \log n)$ in expectation. If $z$ is small, then $\km$ gives a very compact coreset for $k$-means with outliers, but if $z$ is large -- say $z = \Omega(n)$ -- then $\km$ gives a coreset of linear size.
However, the case where $z$ is large is exactly when uniform sampling gives a small coreset.

In the next section, we show how we can combine these two coresets to construct a small coreset that works for all $z$.

\subsection{Our Coreset Construction:\space \framename}

%\begin{wrapfigure}{r}{0.60\textwidth}
%\begin{minipage}{0.60\textwidth}
\begin{figure}
\begin{algorithm}[H]
	%\caption{Framework for Big Data for $k$-Means with Outliers}
	\caption{Coreset Constuction for $k$-Means with Outliers}
	$\framename(X, k, z)$
	\begin{algorithmic}[1]
	    \STATE Let $p = \max(\frac{36}{z} \log (\frac{4nk^2}{z}), 36 \frac{k}{z} \log (2k^3))$.
	    \IF {$p > 1$}
	        \STATE Output $Y \gets \km(X,32(k + z))$.
	    \ELSE
	        \STATE Let $S$ be a sample drawn from $X$, where each $x \in X$ is included in $S$ independently with probability $p$.
		    \STATE Output $Y \gets \km(S, 32(k + 2.5pz))$
        \ENDIF
	\end{algorithmic}
	\label{alg:core}
\end{algorithm}
\end{figure}
%\end{minipage}
%\end{wrapfigure}

Using the above results, our strategy is as follows: Let $(X,k,z)$ be an instance of $k$-means with outliers. If $p > 1$, then we can show that $z = O(k \log n)$, so we can simply run \km\space on the input instance to get a good coreset. Otherwise, $z$ is large, so we first subsample approximately $\frac{kn}{z}$ points from $X$. Let $S$ denote the resulting sample. Then we compute a coreset on $S$ of size $32(k + 2.5 pz)$, where we scale down the number of outliers from $X$ proportionally.

Algorithm~\ref{alg:core} formally describes our coreset construction. We will prove that $\framename$ outputs with constant probability a good coreset for the $k$-means with outliers instance $(X,k,z)$ of size $O(k \log n)$. In particular, we will show:

\begin{theorem}\label{thmcoreset}
	With constant probability, $\framename$ outputs an $(O(1), O(1))$-coreset for the $k$-means with outliers instance $(X,k,z)$ of size $O(k \log n)$ in expectation.
\end{theorem}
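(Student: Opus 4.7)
The plan is to analyze the two branches of $\framename$ separately and then combine their success probabilities. When $p > 1$ (the ``small $z$'' branch), the algorithm runs $\km$ directly on $X$, and it suffices to observe that in this regime $z$ itself must be $O(k \log n)$, so Theorem~\ref{kmcore} immediately yields the desired coreset. When $p \le 1$ (the ``large $z$'' branch), the construction has two stages -- uniform sampling at rate $p$ followed by $\km$ on the sample $S$ -- and the main work is to show that composing the sampling and $\km$ coreset guarantees yields an $(O(1), O(1))$-coreset for $(X,k,z)$.

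For the size bound, the output $Y$ has deterministic size $32(k+z)$ in Case 1 and at most $32(k+2.5pz)$ in Case 2. By the definition of $p$,
\[
pz \;=\; \max\!\bigl(36 \log(4nk^2/z),\; 36 k \log(2k^3)\bigr) \;=\; O(k \log n),
\]
which handles Case 2. In Case 1, the hypothesis $p > 1$ forces either $z < 36 \log(4nk^2/z) = O(\log n + \log k)$ or $z < 36 k \log(2k^3) = O(k \log k)$; in either sub-case $z = O(k \log n)$, and hence $|Y| = O(k \log n)$ again.

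For the coreset guarantee, Case 1 is immediate from Theorem~\ref{kmcore}. In Case 2, I would chain the two constructions. By Theorem~\ref{samplecore}, with probability at least $1 - 1/k^2$, $(S, k, 2.5pz)$ is a $(16, 29)$-coreset for $(X, k, z)$. Conditional on the realized $S$, Theorem~\ref{kmcore} applied with outlier parameter $2.5pz$ says that with probability at least $0.03$, the weighted instance $(Y, k, 2.5pz)$ is a $(1, 124)$-coreset for $(S, k, 2.5pz)$. Unfolding the coreset definition: for any $C$ with $f^Y_{\kappa_1 \cdot 2.5pz}(C) \le \kappa_2 \, Opt(Y, k, 2.5pz)$, the $\km$ guarantee yields $f^S_{\kappa_1 \cdot 2.5pz}(C) \le 124 \kappa_2 \, Opt(S, k, 2.5pz)$; feeding this into the sampling coreset with parameters $(\kappa_1, 124\kappa_2)$ gives $f^X_{16 \kappa_1 z}(C) \le 29 \cdot 124 \kappa_2 \, Opt(X, k, z)$. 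Thus $(Y, k, 2.5pz)$ is a $(16, 29 \cdot 124)$-coreset for $(X, k, z)$, which is $(O(1), O(1))$ as required.

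Finally, since the sampling and $\km$ randomness are independent, the $\km$ guarantee still holds with probability at least $0.03$ conditional on the sampling event, so the intersection has probability at least $0.03(1 - 1/k^2) = \Omega(1)$. The main obstacle I foresee is the bookkeeping in the composition step: one must carefully track how the outlier scale $2.5pz$ simultaneously plays the role of the target scale of the sampling coreset ($S$ viewed inside $X$) and the base scale of the $\km$ coreset ($Y$ viewed inside $S$), and verify that the $(\kappa_1, \kappa_2)$ approximation parameters compose multiplicatively through the coreset definition. Once that is in hand, the two $O(1)$ constants in the final coreset drop out directly from the explicit constants in Theorems~\ref{kmcore} and~\ref{samplecore}.
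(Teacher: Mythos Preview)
Your proposal is correct and follows essentially the same approach as the paper's own proof: both split on $p>1$ versus $p\le 1$, use $p>1 \Rightarrow z=O(k\log n)$ for the size bound in the first branch, and in the second branch chain the $(1,124)$-coreset guarantee of Theorem~\ref{kmcore} (applied to $S$ with outlier parameter $2.5pz$) with the $(16,29)$-coreset guarantee of Theorem~\ref{samplecore} to obtain the composed $(16,\,29\cdot 124)$-coreset. Your explicit probability computation $0.03(1-1/k^2)$ and the slightly more detailed case split for the size bound are minor elaborations but do not change the argument.
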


\subsection{A Near Linear Time Algorithm for k-Means With Outliers}

Using $\framename$,  we show how to speed up $\algname$ to run in near linear time:
Let $Y$ be the result of $\framename(X, k, z)$. Then, to choose $k$ centers we run $\algname(Y,k,z, \mathcal{A})$ if $p > 1$; otherwise, run $\algname(Y,k, 2.5 pz, \mathcal{A})$, where $\mathcal{A}$ is any $O(1)$-approximate $k$-means algorithm with runtime $T(n)$ on inputs of size $n$.

\begin{theorem}\label{lineartime}
    There exists an algorithm that outputs with a constant probability an $O(1)$-approximate solution to $k$-means with outliers while discarding $O(kz)$ outliers in expected time $O(kd n \log^2 n) + T(k \log n)$.
\end{theorem}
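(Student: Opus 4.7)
The plan is to chain Theorem~\ref{thmcoreset} (coreset construction) with the Corollary of Theorem~\ref{algmain} (black-box reduction from $k$-means with outliers to $k$-means) and then perform a careful runtime accounting that case-splits on whether $p > 1$. Fix $\mathcal{A}$ to be any deterministic $O(1)$-approximation for $k$-means without outliers (so that $\alpha = O(1)$ and $T(\cdot)$ is polynomial); then the claimed algorithm is: run $\framename(X,k,z)$ to obtain a coreset $Y$, and then run $\algname(Y, k, z', \mathcal{A})$ with $z' = z$ if $p > 1$ and $z' = 2.5pz$ otherwise.

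For correctness I would argue by composition. By Theorem~\ref{thmcoreset}, with constant probability the instance $(Y, k, z')$ is an $(O(1), O(1))$-coreset of $(X, k, z)$; condition on this event. By the Corollary of Theorem~\ref{algmain} applied to the (weighted) instance $(Y, k, z')$, the output $C$ satisfies $f_{(3k+2)z'}^Y(C) \le 9\alpha \cdot Opt(Y, k, z')$, i.e.\ it is a $(3k+2, 9\alpha)$-approximate solution on $Y$. Plugging $\kappa_1 = 3k+2$ and $\kappa_2 = 9\alpha$ into the coreset definition gives $f_{O(kz)}^X(C) \le O(1) \cdot Opt(X, k, z)$, which is the required $O(1)$-approximation discarding $O(kz)$ outliers. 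This part is clean: the whole content is the unwinding of definitions and noting that $\alpha$ and both coreset constants are $O(1)$.

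The main work is the runtime. I would bound the coreset-construction cost by a case split on $p$. If $p > 1$, the definition of $p$ forces $z = O(k \log n)$ (inverting either term in the $\max$), so running $\km(X, 32(k+z))$ with its standard per-center cost $O(nd)$ takes $O((k+z)nd) = O(knd \log n)$. If $p \le 1$, then $p \le 1$ forces $z = \Omega(k \log k)$ (from the second term in the $\max$), while $pz = O(k \log n)$ (this is exactly what each term of $p$ is designed to guarantee). Hence the expected sample size is $pn = O(kn \log n / z)$ and $\km$ is asked to pick $O(k + pz) = O(k \log n)$ centers, for expected cost $O(k \log n \cdot pn \cdot d) = O(k^2 n d \log^2 n / z) = O(knd \log^2 n)$ since $z = \Omega(k \log k)$. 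After \framename, the coreset has expected size $O(k \log n)$, so Lemma~\ref{lemruntime} bounds the cost of \algname\ on it by $O((k \log n)^2 d) + T(k \log n)$. Summing, and absorbing the $k^2 d \log^2 n$ term into $knd \log^2 n$ (using $k \le n$), yields the stated bound $O(knd \log^2 n) + T(k \log n)$.

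The only delicate step is the simultaneous use of the two inequalities $pz = O(k \log n)$ and, in the $p \le 1$ regime, $z = \Omega(k \log k)$; both follow from inverting the $\max$ defining $p$, but the second is what keeps $pn \cdot k \log n$ below $knd \log^2 n$, and one has to check the $\max$ carefully so that the dominant term is the right one in each branch. The rest is bookkeeping: composing the two constant success probabilities (coreset and, if one chooses a randomized $\mathcal{A}$, the approximation algorithm), invoking linearity of expectation for the sampling step, and verifying that the Corollary of Theorem~\ref{algmain} applies to the weighted coreset $(Y, k, z')$ (which was already noted to extend naturally to weighted instances).
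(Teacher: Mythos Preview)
Your proposal is correct and follows essentially the same approach as the paper: compose Theorem~\ref{thmcoreset} with (the Corollary of) Theorem~\ref{algmain} for correctness, then account for the runtime of \framename\ and of \algname\ on an $O(k\log n)$-size coreset via Lemma~\ref{lemruntime}.

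The only notable difference is that your runtime argument in the $p \le 1$ case is more elaborate than necessary. The paper simply observes that in both branches $\km$ selects $O(k\log n)$ centers from a set of size at most $n$, so its cost is $O((k\log n)\,dn)$ outright; there is no need to compute the expected sample size $pn$ or invoke $z = \Omega(k\log k)$ to cancel the extra factor of $k/z$. Your route is valid and even yields a slightly sharper intermediate bound, but the cruder $|S|\le n$ estimate already suffices for the stated theorem.
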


\iffalse
\section{A Near Linear Time Algorithm for $k$-Means With Outliers}

In this section, we present our linear time algorithm for $k$-means with outliers based on \algname~ and \framename.

The following algorithm achieves near-linear running time and constant approximation factor. Let $\mathcal{A}$ be any $O(1)$ approximate $k$-means algorithm with runtime $T(n,k)$ for input $(X,k)$, where $X$ has size $n$.

\begin{algorithm}[H]
	\caption{Near Linear Time for $k$-Means with Outliers}
	$\textsc{Fast}\algname(X, k, z)$
	\begin{algorithmic}[1]
	    \STATE Initialize $Y \gets \framename(X, k, z)$
	    \IF {$z \leq k \log n$}
	        \STATE Initialize $C \gets \algname(Y,k,z)$
	    \ELSE
	        \STATE Initialize $C \gets \algname(Y,k, 2.5 \frac{z}{n}s)$
	    \ENDIF
		\STATE Output $C$
	\end{algorithmic}
\end{algorithm}
Combining Theorems~\ref{algmain} and~\ref{thmcoreset}, we have:

\begin{theorem}\label{lineartime}
    $\textsc{Fast}\algname$ outputs with constant probability an $O(1)$-approximate solution to $k$-means with outliers while discarding $O(kz)$ outliers in time $O(k^2 d n \log^2 n) + T(k \log n)$
\end{theorem}

\fi

\section{Experiment Results}

This section presents our experimental results. The main conclusions are:
\begin{compactitem}
\item Our algorithm \algname~ almost always has the best performance and finds the largest proportion of ground truth outliers. In the cases where \algname~ is not the best, it is competitive within 5\%.
\item Our algorithm results in a stable solution.  Algorithms without theoretical guarantees have unstable objectives on some experiments. 
\item Our coreset construction \framename~ allows us to run slower, more sophisticated, algorithms with theoretical guarantees on large inputs.  Despite their theoretical guarantees, their practical performance is not competitive.
\end{compactitem}

\begin{table*}[h]
\begin{center}
\begin{tabular}{c|c|c|c|c|c}
\hline
 & Primal-Dual & $k$-means-- & Local Search & Uniform Sample  & \algname\\
 \hline 
run time > 4hrs & 9/16 & 1/16 & 8/16 & 0/16 & 0/16  \\
 \hline
precision < 0.8 & 2/16 & 0/16 & 0/16 & 4/16 & 0/16  \\
 \hline 
total failure & 11/16 & 1/16 & 8/16 & 4/16 & 0/16 \\
\hline
\end{tabular}
\caption {Failure rates due to high run time or low precision.} \label{table:failure-main}
\end{center}
\end{table*}

The experiments shows that for a modest overhead for preprocessing, \algname~makes $k$-means clustering more robust to noise.

\noindent \textbf{Algorithms Implemented:} Our new coreset construction  makes it feasible to compare many algorithms for large data sets. Without this, most known algorithms for k-means with outliers become prohibitively slow even on modestly sized data sets. In our experiments, the coreset construction we utilize is \framename. More precisely, we first obtain a uniform sample by sampling each point independently with probability $p = \min \{\frac{2.5 k \log n}{z}, 1\}$. Then, we run $k$-means++ on the sample to choose $k + pz$ centers -- the resulting coreset is of size $k + pz$. 

Next we describe the algorithms tested. Besides the coreset construction, we use $k$-means++ to mean running $k$-means++ and then Lloyd's algorithm for brevity. For more details, see Supplementary Material~\ref{sec:app-exp}. In the following, ``on coreset'' refers to running the algorithm on the coreset as opposed to the entire input. For fair comparison, we ensure each algorithm discards \emph{exactly} $z$ outliers regardless of the theoretical guarantee.  At the end of each algorithm's execution, we discard the $z$ farthest points from the chosen $k$ centers as outliers.

\textbf{Algorithms Tested:}
\begin{compactenum}
    \item \textbf{\algname~ (plus $k$-means++ on coreset)}: We use \algname~with $k$-means++ as the input $\mathcal{A}$. The algorithm requires a bound on the objective $Opt$. For this, we considered powers of 2 in the range of $[n \min_{u, v \in X} d^2(u, v), n \max_{u, v \in X} d^2(u, v)]$. 
    \item \textbf{$k$-means++ (on the original input)}: Note this algorithm is not designed to handle outliers.  %The coreset is not used in this algorithm.
    \item \textbf{$k$-means++ (on coreset)}: Same note as the above. %This algorithm runs $k$-means++ on the coreset. 
    \item \textbf{Primal-dual algorithm of \cite{CharikarKMN01} (on coreset)}: A sophisticated algorithm based on constructing an approximate linear program solution.
    \item \textbf{Uniform Sample (conservative uniform sampling plus $k$-means++)}: We run  $k$-means++ on a uniform sample consisting of points sampled with probability $1 / (2z)$.
    %    A uniform sample was obtained by sampling each point with probability $1 / (2z)$, and then,  on the sample. 
    \item \textbf{$k$-means-- \citep{ChawlaG13} on coreset}: This algorithm is a variant of  the Lloyd's algorithm that executes each iteration of Lloyd's excluding the farthest $z$ points. 
    \item \textbf{Local search of \cite{GuptaKLMV17} (on coreset) }: This is an extension of the well-known $k$-means local search algorithm.%: In principle, this algorithm may end up with discarding $\Omega(z k \log n)$ points. However, it was observed that it never discarded more than $2z$ points in experimentation. We adopt the practical implementation of the algorithm described in \citep{GuptaKLMV17}. When the algorithm converges we enforce the farthest $z$ points to be the outliers. 
        %Therefore, we chose a practical implementation of the algorithm, which was running $k$-means++, discarding $z$ outliers, and then running $k$-means++ again on the remaining data. 
\end{compactenum}

    %We note that once these algorithms identify the $k$ centers, we consider all points and discard the farthest $z$ points. Our experiments show that our $\algname$ consistently outperforms all other algorithms

\iffalse. 
\subsection{Test Data Sets}

We used the following data sets in our experiments. 

\begin{enumerate}
    \item Gauss.
    \item \texttt{susy-}$\Delta$\citep{susy}. This data set has been generated by  Monte Carlo simulations described in \citep{baldi2014searching}. The set has 5M instances, each having 18 features. We normalized the data such that the average and std are 0 and 1 on each dimension, respectively.  Then, we randomly sampled $z = 1\% * 5M$ points uniformly at random from $[-\Delta, \Delta]^{18}$ and added as noise. 
    \item
    \item 
\end{enumerate}
\fi

\noindent \textbf{Experiments:} We now describe our experiments which were done on both synthetic and real data sets. 

\paragraph{Synthetic Data Experiments} We first conducted experiments with synthetic data sets of various parameters. Every data set has $n$ equal one million points and $k, d \in \{10, 20\}$ and $z \in \{10000, 50000\}$.  Then we generated $k$ random Gaussian balls. For the $i$th Gaussian we choose a center $c_i$ from $[-1/2, 1/2]^d$ uniformly at random. These are the true centers. Then, we add $n/k$ points drawn from $\mathcal{N}(c_i,1 )$ for the $i$th Gaussian. Next, we add noise. Points that are noise were sampled uniformly at random either from the same range $[-1/2, 1/2]^d$ or from a larger range $[-5/2, 5/2]^d$ depending on the experiment. We tagged the farthest $z$ points from the centers $\{c_1, \ldots, c_k\}$ as ground truth outliers. We consider all possible $16$  combinations of $k,d, z$ values and the noise range. 

Each experiment was conducted $3$ times, and we chose the result with the minimum objective and measured the total running time over all $3$ runs.   We aborted the execution if the algorithm failed to terminate within $4$ hours. %All experiments were performed on MERCED cluster\footnote{We gratefully acknowledge computing time on the Multi-EnvironmentComputer for Exploration and Discovery (MERCED) cluster at UC Merced, which was funded by National Science Foundation Grant No. ACI-1429783.} using a single node with 20 cores at 2301MHz and RAM size 128GB.
All experiments were performed on a cluster using a single node with 20 cores at 2301MHz and RAM size 128GB. Table \ref{table:failure-main} shows the number of times each algorithm aborted due to high run time. Also we measured the recall, which is defined as number of ground truth outliers reported by the algorithm, divided by $z$, the number of points discarded. The recall was the same as the precision in all cases, so we use precision in the remaining text. We choose $0.8$ as the threshold for the acceptable precision and counted the number of inputs for which each algorithm had precision lower than $0.8$. Our algorithm \algname, $k$-means++ on coreset, and  $k$-means++ on the original input all had precision greater than $0.99$ for all data sets and always terminated within $4$ hours. The $k$-means++ results are excluded from the table. Details of the quality and runtime are deferred to the Supplementary Material  ~\ref{sec:app-exp}.

\begin{table*}[h]
\begin{center}
\begin{tabular}{c|c|c|c|c|c|c|c}
\hline
                &\textsc{Skin}-5 & \textsc{Skin}-10 & \textsc{Susy}-5 & \textsc{Susy}-10 & \textsc{Power}-5 & \textsc{Power}-10 & \textsc{KddFull}
 \\

 \hline 
  \multirow{3}{*}{\algname} & 1&\bf{1}&\bf{1}&\bf{1}&\bf{1}&\bf{1}&\bf{1}\\&\bf{0.8065}& \bf{0.9424} & 0.8518 & 0.9774 & 0.6720 & 0.9679 & 0.6187\\&
  56 & 56& 1136 & 1144 & 363 & 350 & 1027

  \\
 \hline
   \multirow{3}{*}{$k$-means--} & 0.9740 & 1.5082 & 1.2096 & 1.1414 & 1.0587 & 1.0625 & 2.0259\\&
   0.7632 & 0.9044 & 0.8151 & 0.9753 & 0.6857 & 0.9673 & \bf{0.6436} \\&
   86 & 89 & 672 & 697 & 291 & 251 & 122
 \\
 \hline 
    \multirow{3}{*}\textsc{$k$-means++} &  1.0641 & 1.4417 & 1.0150 & 1.0091 & 1.0815 & 1.0876 & 1.5825 \\coreset&0.7653 & 0.9012 & \bf{0.8622} & \bf{0.9865} & \bf{0.7247} & \bf{0.9681} & 0.3088 \\& 39 & 37 & 462 & 465 & 177 & 142 & 124

  \\
\hline
 
   \multirow{3}{*}\textsc{$k$-means++} & \bf{0.9525} & 1.6676 & 1.0017 & 1.0351 & 1.0278 & 1.0535 & 1.5756 \\original& 0.7775 & 0.8975 & 0.8478 & 0.9814 & 0.7116 & 0.9649 & 0.3259 \\&
   34 & 43 & 6900 & 6054 & 689 & 943 & 652
\\
\hline

\end{tabular}
\end{center}
\caption {Experiment results on real data sets with $\Delta = 5, 10$. The top, middle, bottom in each entry are the objective (normalized relative to \algname), precision, and run time (sec.), resp. Bold indicates the best in the category.}
\label{table:real}
\end{table*}

\iffalse
\begin{table}[]
\begin{tabular}{c|c|c|c|c|c|c|c}
 & \algname & (P)$k$-means++ & $k$-means++ & PD & $k$-means-- & LS & Uniform  \\
 \hline
RT > 4hrs & 0 & 0 &  &  &  &  &  \\
 \hline
Precision < 0.8 &  &  &  &  &  &  & 
\end{tabular}
\end{table}
\fi

\paragraph{Real Data Experiments} For further experiments, we used real data sets. We used the same normalization, noise addition method and the same value of $k = 10$ in all experiments.  The data sets are \textsc{Skin}-$\Delta$, \textsc{Susy}-$\Delta$, and \textsc{Power}-$\Delta$. We normalized the data such that the mean and standard deviation are $0$ and $1$ on each dimension, respectively.  Then we randomly sampled $z = 0.01 n$ points uniformly at random from $[-\Delta, \Delta]^{d}$ and added them as 
noise. We discarded data points with missing entries.

\iffalse
\begin{enumerate}
    \item \textsc{Skin}-$\Delta$ \citep{skin}. $n = 245057$, $d = 3$, $k = 10$, $z = 0.01 n$. Only the first three features were used. 
    \item \textsc{Susy}-$\Delta$ \citep{susy}. $n = 5$M, $d = 18$, $k = 10$, $z = 0.01 n$. 
    \item \textsc{Power}-$\Delta$ \citep{power}. $n = 2049280$, $d = 7$, $k = 10$, $z = 0.01n$. Out of 9 features, we dropped the first two, date and time, that denote when the measurements were made. 
    \item \textsc{KddFull} \citep{kddfull}. $n = 4898431$, $d = 34$, $k = 3$, 
$z = 45747$. Each instance has $41$ features and we excluded 7 non-numeric features. This data set has 23 classes and 3 classes %(normal, nepture, smurf) 
account for 98.3\% of the instances. We considered the other 45747 instances as ground truth outliers. 
\end{enumerate}
\fi

\textbf{Real Data Sets:}
\begin{compactenum}
    \item \textbf{\textsc{Skin}-$\Delta$} \citep{skin}. $n = 245057$, $d = 3$, $k = 10$, $z = 0.01 n$. Only the first $3$ features were used.
    \item \textbf{\textsc{Susy}-$\Delta$} \citep{susy}. $n = 5$M, $d = 18$, $k = 10$, $z = 0.01 n$.
    \item \textbf{\textsc{Power}-$\Delta$} \citep{power}. $n = 2049280$, $d = 7$, $k = 10$, $z = 0.01n$. Out of $9$ features, we dropped the first $2$, date and time, that denote when the measurements were made. 
    \item \textbf{\textsc{KddFull}} \citep{kddfull}. $n = 4898431$, $d = 34$, $k = 3$,  $z = 45747$. Each instance has $41$ features and we excluded $7$ non-numeric features. This data set has $23$ classes and $3$ classes account for $98.3\%$ of the data points. We considered the other $45747$ data points as ground truth outliers.  
\end{compactenum}

% Thus, unlike in the experiment results of the synthetic data sets, a clustering with higher precision is not necessarily better than others with lower precision.  his is because we do not have absolute ground truth outliers.

%Note that some of the random noise points added could be considered as non-outliers if they are close to the underlying clusters. Therefore, we believe the objective is a more accurate measure of the solution quality. 

Table~\ref{table:real} shows our experiment results for the above real data sets. Due to their high failure rate observed in Table~\ref{table:failure-main} and space constraints, we excluded the primal-dual, local search, and  conservative uniform sampling algorithms from  Table~\ref{table:real}; all results can be found in Supplementary Material~\ref{sec:app-exp}.
%$We continued to test $k$-means-- since it failed to terminate only for one input. 
As before, we executed each algorithm $3$ times.  It is worth noting that \algname~is the \emph{only} algorithm with the worst case guarantees shown in Table~\ref{table:real}. This gives a candidate explanation for the stability of our algorithm's solution quality across all data sets in comparison to the other algorithms considered.

%It is worth noting \algname~ considerably outperformed other algorithms for the same data set with different $\Delta = 10$.As discussed above,  precision may not be the best measure here as some added noise points may not serve as noise. For this data set, \algname~ has the best objective and its precision is almost as high as others when $\Delta = 10$.

The result shows that our algorithm  \algname~has the best objective for all data sets, except within $5\%$ for \textsc{Skin}-5.  Our algorithm is always competitive with the best precision.  For \textsc{KddFull} where we didn't add artificial noise, $\algname$ significantly outperformed other algorithms in terms of objective. We can see that \algname~pays extra in the run time to remove outliers, but this preprocessing enables stability, and competitive performance.

\section{Conclusion}

This paper presents a near linear time algorithm for removing noise from data before applying a $k$-means clustering.  We show that the algorithm has provably strong guarantees on the number of outliers discarded and approximation ratio. Further, \algname\space gives the first pseudo-approximation-preserving reduction from $k$-means with outliers to $k$-means without outliers. Our experiments show that the algorithm is the fastest among algorithms with provable guarantees and is more accurate than state-of-the-art algorithms. It is of interest to determine if the algorithm achieves better guarantees if data has more structure such as being in low dimensional Euclidean space or being assumed to be well-clusterable \citep{BravermanMORST11}.

\section{Acknowledgments}

S. Im and M. Montazer Qaem were supported in part by NSF grants CCF-1617653 and 1844939. B. Moseley and R. Zhou  were supported in part by NSF grants CCF-1725543, 1733873, 1845146, a Google Research Award, a Bosch junior faculty chair and an Infor faculty award. 

%-------------------------------------------
\bibliographystyle{plainnat}
\bibliography{koutlier}
%-------------------------------------------
\clearpage

\appendix
\section*{Supplementary Material for: Fast Noise Removal for $k$-Means Clustering}

\section{Analysis of \space \algname}

The goal of this section is to prove Theorem \ref{algmain}.
For the remainder of this section, let $C^*$ denote the optimal $z$-solution and $C$ denote the output of $\algname(X,k,z, \mathcal{A})$. Again, let  $Opt \defeq Opt(X,k,z)$.

We first show the benefits of optimal clusters having size at least $3z$.

\begin{claim}\label{heavyprop}
   For each optimal center $c^* \in C^*$, let $x(c^*) \in X$ be the closest input point to $c^*$. If the cluster defined by $c^* \in C^*$ has size at least $3z$, then $d(x(c^*), c^*) \leq (\frac{Opt}{3z})^{1/2}$.
\end{claim}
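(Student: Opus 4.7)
The claim is an averaging argument. The plan is to use the fact that an optimal cluster of size at least $3z$ contributes at least $3z$ squared-distance terms to the objective $Opt$, so at least one of its points must lie close to $c^*$, and $x(c^*)$ can only be closer.

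More concretely, let $C^*_i \subseteq X_z(C^*)$ denote the cluster assigned to $c^*$ in the optimal clustering, and suppose $|C^*_i| \geq 3z$. First I would observe that
\[
\sum_{x \in C^*_i} d^2(x, c^*) \;\leq\; \sum_{x \in X_z(C^*)} d^2(x, C^*) \;=\; Opt,
\]
since the left-hand side is a partial sum of the terms defining $Opt$. Then by averaging, there exists at least one point $y \in C^*_i$ with
\[
d^2(y, c^*) \;\leq\; \frac{Opt}{|C^*_i|} \;\leq\; \frac{Opt}{3z};
\]
otherwise every term in the sum on $C^*_i$ would strictly exceed $Opt/(3z)$, giving total $> Opt$, a contradiction.

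Finally, because $x(c^*)$ is defined as the input point in $X$ closest to $c^*$, and $y \in C^*_i \subseteq X$ is one candidate, we conclude $d(x(c^*), c^*) \leq d(y, c^*) \leq \sqrt{Opt/(3z)}$, which is the desired bound. There is no real obstacle here; the only thing to be careful about is invoking the correct definition of $C^*_i$ (the clustering is of $X_z(C^*)$, not of all of $X$), so that the sum of squared distances in the cluster is dominated by $Opt$ rather than some larger quantity that would include outlier contributions.
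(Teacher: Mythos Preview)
Your proof is correct and is essentially the same averaging argument as the paper's, just phrased directly rather than by contradiction: the paper assumes $d(x(c^*),c^*)>\sqrt{Opt/(3z)}$, deduces that every point in the cluster is at least that far (since $x(c^*)$ is closest), and obtains total cost exceeding $Opt$. Your careful remark that the cluster lies in $X_z(C^*)$, so its squared-distance sum is bounded by $Opt$, is exactly the point the paper uses implicitly.
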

\begin{proof}
    Assume for contradiction that $d(x(c^*), c^*) > (\frac{Opt}{3z})^{1/2}$. Thus for each input point $x \in X$ that is assigned to center $c^*$ in the optimal solution, we have $d^2(x,c^*) > \frac{Opt}{3z}$. There are at least $3z$ such points, so we can lower bound the assignment cost of these points by $3z(\frac{Opt}{3z}) = Opt$. This is a contradiction.
\end{proof}

\begin{lemma}\label{assume}
    If the cluster defined by $c^* \in C^*$ in the optimal solution has size at least $3z$, then $x(c^*)$ is heavy.
\end{lemma}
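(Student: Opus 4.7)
The plan is to exhibit at least $2z$ input points inside $B(x(c^*), r)$ by looking at the optimal cluster of $c^*$ itself. Let $C_{c^*} \subseteq X_z(C^*)$ denote the cluster defined by $c^*$ in the optimal solution, which by hypothesis has $|C_{c^*}| \geq 3z$.

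First, I would localize most of $C_{c^*}$ close to $c^*$ via a Markov-type argument. Since the optimal $z$-cost is $Opt$, we have $\sum_{p \in C_{c^*}} d^2(p, c^*) \leq Opt$, so the number of points $p \in C_{c^*}$ with $d^2(p, c^*) > Opt/z$ is at most $z$. Hence at least $|C_{c^*}| - z \geq 2z$ points $p \in C_{c^*}$ satisfy $d(p, c^*) \leq (Opt/z)^{1/2}$.

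Next, I would apply the triangle inequality together with Claim~\ref{heavyprop} to push each such $p$ into $B(x(c^*), r)$. Since $d(x(c^*), c^*) \leq (Opt/(3z))^{1/2} \leq (Opt/z)^{1/2}$, for any $p$ from the previous step we get
\[
d(p, x(c^*)) \leq d(p, c^*) + d(c^*, x(c^*)) \leq (Opt/z)^{1/2} + (Opt/(3z))^{1/2} \leq 2(Opt/z)^{1/2} = r.
\]
Thus $B(x(c^*), r)$ contains at least $2z$ points from $C_{c^*} \subseteq X$, so $|B(x(c^*), r)| \geq 2z$ and $x(c^*)$ is heavy by definition.

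There is not really a hard step here — the only thing to be careful about is that the Markov bound is applied to the \emph{non-outlier} points of the optimal solution (which is exactly what the assumption on cluster sizes provides), and that the chosen radius $r = 2(Opt/z)^{1/2}$ in Algorithm~\ref{alg:nk} is large enough to absorb both the distance from $p$ to $c^*$ and the distance from $c^*$ to its nearest input point $x(c^*)$. Both of these match cleanly with the constants already baked into the algorithm.
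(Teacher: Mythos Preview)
Your proof is correct and is essentially the same argument as the paper's, just phrased directly rather than by contradiction: the paper assumes $|B(x(c^*),r)| < 2z$, deduces that at least $z+1$ cluster points lie outside the ball and hence each has cost $\geq Opt/z$, and derives a contradiction, whereas you run the Markov count forward to place $\geq 2z$ cluster points inside the ball. Both rely on the same two ingredients---Claim~\ref{heavyprop} for $d(x(c^*),c^*)$ and the counting bound on points with cost exceeding $Opt/z$---so there is no substantive difference.
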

\begin{proof}
	Assume for contradiction that $x(c^*)$ is light, so $\lvert B(x(c^*), r) \rvert < 2z$. However, at least $3z$ points are assigned to $c^*$ in the optimal solution, so there are at least $z + 1$ such points outside of $B(x(c^*),r)$.
	
	Let $x \notin B(x(c^*),r)$ be such a point that is assigned to $c^*$ in the optimal solution. By the triangle inequality, we have:
	\[d(x,x(c^*)) \leq d(x,c^*) + d(x(c^*), c^*)\]
	, which implies $d(x,c^*) \geq r - d(x(c^*),c^*) \geq 2(Opt/z)^{1/2} - (1/3)^{1/2} (Opt/z)^{1/2} \geq (Opt/z)^{1/2}$.
	
	We conclude that for at least $z+1$ points assigned to $c^*$ in the optimal solution, their assignment costs are each at least $Opt/z$. This is a contradiction.
\end{proof}

Now using this result, we can upper bound the number of outliers required by $\algname(X,k,z, \mathcal{A})$ to remain competitive with the optimal $z$-solution (we will show that this quantity is upper bounded by the size of $Y$ at the end of $\algname(X,k,z, \mathcal{A})$.)\\

\begin{lemma}\label{numoutliers}
	At the end of $\algname(X,k,z, \mathcal{A})$, $\lvert Y \rvert \leq 3z \#\{\text{optimal clusters of size less than $3z$}\} + 2z$.
\end{lemma}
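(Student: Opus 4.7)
The plan is to partition $X$ based on the optimal solution, and then bound $|Y|$ by analyzing each part separately. Specifically, let $C^* = \{c_1^*, \dots, c_k^*\}$ be the optimal $z$-solution. Let $L \subseteq X$ be the set of $z$ outliers under $C^*$, let $S_{\text{small}} \subseteq X$ be the union of optimal clusters of size less than $3z$, and let $S_{\text{large}} \subseteq X$ be the union of optimal clusters of size at least $3z$. Then $X = L \cup S_{\text{small}} \cup S_{\text{large}}$, and it suffices to bound $|Y|$ restricted to each piece. Two pieces are immediate: $|Y \cap L| \leq |L| = z$, and $|Y \cap S_{\text{small}}| \leq |S_{\text{small}}| \leq 3z \cdot \#\{\text{small optimal clusters}\}$. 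These already contribute $z + 3z \cdot \#\{\text{small clusters}\}$ to the bound.

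The heart of the argument is to show $|Y \cap S_{\text{large}}| \leq z$, which will give the remaining $z$ in the $2z$ additive term. For each large optimal cluster with center $c^*$, Lemma \ref{assume} tells us that $x(c^*)$ is heavy. Thus if $x \in Y$ is assigned to such a $c^*$ in the optimal solution, then by definition of $Y$ its ball $B(x,r)$ contains no heavy point, so in particular $x(c^*) \notin B(x,r)$, i.e.\ $d(x, x(c^*)) > r$. Combined with Claim \ref{heavyprop}, which gives $d(x(c^*), c^*) \leq (Opt/(3z))^{1/2}$, the triangle inequality yields
\[
d(x, c^*) \geq d(x, x(c^*)) - d(x(c^*), c^*) > r - (Opt/(3z))^{1/2} = \bigl(2 - 1/\sqrt{3}\bigr)(Opt/z)^{1/2}.
\]
Squaring and using $(2 - 1/\sqrt{3})^2 > 1$, we obtain $d^2(x, c^*) > Opt/z$ for every such $x$.

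The final step is a charging argument: each point in $Y \cap S_{\text{large}}$ contributes strictly more than $Opt/z$ to the optimal objective (since it is a non-outlier in $C^*$ and its optimal assignment cost exceeds $Opt/z$). Since the optimal objective is exactly $Opt$ and all these contributions are disjoint, there can be at most $z$ such points, i.e.\ $|Y \cap S_{\text{large}}| \leq z$. Summing the three bounds gives $|Y| \leq z + 3z \cdot \#\{\text{small clusters}\} + z$, which matches the lemma. The only delicate step is the numerical check that $(2 - 1/\sqrt{3})^2 > 1$, which is why the choice $r = 2(Opt/z)^{1/2}$ is tight enough to make the argument work; everything else is bookkeeping over the partition.
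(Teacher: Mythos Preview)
Your proof is correct and follows essentially the same approach as the paper: both use Lemma~\ref{assume} and Claim~\ref{heavyprop} to show that any point of $Y$ belonging to a large optimal cluster must have assignment cost exceeding $Opt/z$, and then bound the number of such points by a cost-counting argument. The only difference is organizational---you partition $X$ as outliers $\cup$ small-cluster points $\cup$ large-cluster points and argue contrapositively on the third piece, whereas the paper partitions by the distance threshold $(Opt/z)^{1/2}$ first---but the substance is identical.
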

\begin{proof}
	Let $C^* = \{c_1^*, \dots, c_k^*\} \subset X$.
	
	For each $x \in X$, we will classify points into two types:
	\begin{enumerate}[1)]
		\item $d(x,C^*) \leq (Opt/z)^{1/2}$:
		
		We have that $x$ satisfies $d(x, C^*) = d(x, c^*) \leq (Opt/z)^{1/2}$ for some $c^* \in C^*$. If the cluster defined by $c^*$ has size at least $3z$, then by Lemma \ref{assume}, $x(c^*)$ is heavy.
		
		Further, $d(x, x(c^*)) \leq d(x, c^*) + d(x(c^*),c^*) \leq (Opt/z)^{1/2} + (1/3)^{1/2}(Opt/z)^{1/2} \leq 2 (Opt/z)^{1/2}$, so $x(c^*) \in B(x,r)$. Thus, we will not add $x$ to $Y$ if its nearest optimal cluster has size at least $3z$.
		
		\item $d(x,C^*) > (Opt/z)^{1/2}$:
		
		We claim that there are at most $2z$ such $x \in X$ satisfying $d(x,C^*) > (Opt/z)^{1/2}$. Assume for contradiction that there are at least $2z + 1$ points $x \in X$ with $d(x,C^*) > (Opt/z)^{1/2}$. At most $z$ of these points can be outliers, so the optimal solution must cluster at least $z+1$ of these points. Thus we can lower bound the assignment cost of these points to $C^*$ by:
		\[(z + 1)r^2 = (z + 1) (Opt/z) > Opt\]
		This is a contradiction.
	\end{enumerate}

	We conclude that $Y$ includes no points of type 1 from clusters of size at least $3z$, at most $3z$ points from each cluster of size less than $3z$, and at most $2z$ points of type 2.
\end{proof}
\begin{corollary}\label{corlarge}
    If every optimal cluster has size at least $3z$, then at the end of $\algname(X,k,z, \mathcal{A})$, $\lvert Y \rvert \leq 2z$.
\end{corollary}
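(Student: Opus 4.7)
The plan is to derive this corollary immediately from Lemma~\ref{numoutliers}. Under the hypothesis that every optimal cluster has size at least $3z$, the set of optimal clusters of size less than $3z$ is empty, so its cardinality is $0$. Substituting this value into the bound $\lvert Y \rvert \leq 3z \cdot \#\{\text{optimal clusters of size less than } 3z\} + 2z$ given by Lemma~\ref{numoutliers} immediately yields $\lvert Y \rvert \leq 3z \cdot 0 + 2z = 2z$, which is exactly the desired conclusion.

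Since the corollary reduces to a one-line specialization, there is no real obstacle to surmount; essentially all the work was done in proving Lemma~\ref{numoutliers}. If I wanted to present a self-contained argument instead of invoking the lemma as a black box, I would recall its two-type partition of $X$: type-1 points $x$ with $d(x, C^*) \leq (Opt/z)^{1/2}$ and type-2 points with $d(x, C^*) > (Opt/z)^{1/2}$. The corollary's hypothesis forces every type-1 point into the favorable case analyzed in the lemma, because its nearest optimal center $c^*$ now necessarily defines a cluster of size at least $3z$. Consequently, by Lemma~\ref{assume}, the representative $x(c^*)$ is heavy, and the triangle inequality places $x(c^*)$ inside $B(x,r)$, so $x$ is never added to $Y$. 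Hence $Y$ is populated only by type-2 points, of which there are at most $2z$ by the counting argument in the lemma (having more than $2z$ of them would leave at least $z+1$ nonoutlier points each paying assignment cost exceeding $Opt/z$, contradicting the definition of $Opt$). This gives $\lvert Y \rvert \leq 2z$.
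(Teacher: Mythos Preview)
Your proposal is correct and matches the paper's approach exactly: the corollary is stated immediately after Lemma~\ref{numoutliers} with no separate proof, because it is precisely the specialization you describe---setting the count of small clusters to zero in the bound $\lvert Y \rvert \leq 3z \cdot \#\{\text{optimal clusters of size} < 3z\} + 2z$. Your optional self-contained argument also faithfully recapitulates the two-type partition from the proof of Lemma~\ref{numoutliers}.
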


It remains to bound the $\lvert Y \rvert$-cost of $C$. Recall that the $\lvert Y \rvert$-cost of $C$ is the cost of clustering $X$ with $C$ excluding the $\lvert Y \rvert$ points of largest assignment cost.

Intuitively, we do not need to worry about the points in $X$ that are clustered in both the $\lvert Y \rvert$-solution $C$ and the $z$-solution $C^*$ -- so the points in $X_{\lvert Y \rvert}(C) \cap X_z(C^*)$, because such points are paid for in both solutions.

We must take some care to bound the cost of the points in $X$ that are clustered by the $\lvert Y \rvert$-solution $C$ but are outliers in the $z$-solution $C^*$, because such points could have unbounded assignment costs to $C^*$. Here we will use the following property of heavy points:\\

\begin{lemma}\label{heavy}
	Let $x \in X$ be a heavy point. Then there exists some optimal center $c^* \in C^*$ such that $d(x, c^*) \leq 2r$.
\end{lemma}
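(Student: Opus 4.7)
The plan is to find a point $y \in B(x,r)$ that is close to $C^*$ and then use the triangle inequality to transfer this bound to $x$ itself. Since $x$ is heavy, we have $|B(x,r)| \geq 2z$, so we have many candidate points to work with; we just need to show that not all of them can be far from $C^*$.

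First I would classify input points as ``far'' if $d(y, C^*) > (Opt/z)^{1/2}$ and ``near'' otherwise. The key counting step is to upper bound the number of far points in $X$. At most $z$ far points can be outliers in the optimal $z$-solution, so the remaining far points lie in $X_z(C^*)$ and each contributes strictly more than $Opt/z$ to the clustering cost. Since the total cost is $Opt$, there can be at most $z-1$ clustered far points. Together, this gives at most $2z-1$ far points in all of $X$.

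Since $|B(x,r)| \geq 2z > 2z - 1$, there must exist at least one near point $y \in B(x,r)$, meaning $d(y, c^*) \leq (Opt/z)^{1/2}$ for some $c^* \in C^*$. Applying the triangle inequality,
\[
d(x, c^*) \leq d(x, y) + d(y, c^*) \leq r + (Opt/z)^{1/2} = 2(Opt/z)^{1/2} + (Opt/z)^{1/2} \leq 2r,
\]
which is the desired bound (in fact we get the slightly stronger $(3/2)r$).

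The only subtle point is to make the counting argument use strict inequality correctly, i.e., to make sure the bound on far points is $2z-1$ and not $2z$, since $|B(x,r)| \geq 2z$ only guarantees one near point when the far-point count is strictly less than $2z$. This is handled by the strict inequality $d^2(y,C^*) > Opt/z$ for far points, which combined with the summation $\sum_{y \in X_z(C^*)} d^2(y,C^*) \leq Opt$ forces the count of clustered far points to be at most $z-1$.
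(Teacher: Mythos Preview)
Your proof is correct but organized differently from the paper's. The paper argues by contradiction on the conclusion: assuming $d(x,C^*) > 2r$, the triangle inequality forces every $x' \in B(x,r)$ to satisfy $d(x',C^*) > r$; since at least $z$ of the $\geq 2z$ points in the ball lie in $X_z(C^*)$, their combined assignment cost exceeds $z r^2 = 4\,Opt$, a contradiction. You instead bound \emph{globally} the number of points with $d(\cdot,C^*) > (Opt/z)^{1/2}$ by $2z-1$, locate a ``near'' witness inside the ball by pigeonhole, and apply the triangle inequality directly. Your route yields the slightly sharper bound $d(x,c^*) \leq \tfrac{3}{2}r$ and is constructive rather than by contradiction; the paper's route is marginally shorter because it never needs the strict-vs-nonstrict bookkeeping your counting argument hinges on (it only needs $z$ clustered points in the ball, each with cost $> r^2$, to exceed $4\,Opt$).
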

\begin{proof}
	Assume for contradiction that $d(x, c^*) > 2r$ for every $c^* \in C^*$.
	However, $x$ is heavy, so $\lvert B(x, r) \rvert \geq 2z$. At least $z$ points in $B(x,r)$ must be clustered by the optimal $z$-solution $C^*$.
	
	Consider any such $x' \in B(x,r) \cap X_z(C^*)$. By the triangle inequality, we have 
	\[2r < d(x, C^*) \leq d(x, x') + d(x', C^*) \leq r + d(x',C^*)\]
	This implies $d(x', C^*) > r$.
	Thus we can lower bound the assignment cost to $C^*$ of all points in $B(x,r) \cap X_z(C^*)$ by:
	\[\sum\limits_{x' \in B(x,r) \cap X_z(C^*)} d^2(x', C^*) > zr^2 = 4Opt\]
	This is a contradiction.
\end{proof}

Now we are ready to prove the main theorem of this section.

\begin{proof}[Proof of Theorem \ref{algmain}]
	By Corolloary \ref{corlarge}, we have $f_{2z}^X(C) \leq f^{X \setminus Y}(C)$.
	
	Further, by construction, $C$ is an $\alpha$-approximate $k$-means solution on $X \setminus Y$. Then
	\[f^{X \setminus Y}(C) \leq \alpha f^{X \setminus Y^*} \leq \alpha f^{X \setminus Y}(C^*),\]
	so it suffices to show that $f^{X \setminus Y}(C^*) \leq 9 \cdot Opt$.
	
	We will consider two types of points:
	\begin{enumerate}[1)]
		\item $x \in (X \setminus Y) \cap X_z(C^*)$, so points in $X \setminus Y$ that are also clustered in the optimal $z$-solution $C^*$:
		
		We have \begin{eqnarray*}
		&&\sum\limits_{x \in (X \setminus Y) \cap X_z(C^*)} d^2(x,C^*) \\
        &&\leq \sum\limits_{x \in X_z(C^*)} d^2(x, C^*) = f_z^X(C^*).
		\end{eqnarray*}
		
		\item $x \in (X \setminus Y) \cap (X \setminus X_z(C^*))$, so points in $X \setminus Y$ that are outliers in the optimal $z$-solution $C^*$:
		
		Observe that by definition, $\lvert X \setminus X_z(C^*) \rvert = z$, so there are at most $z$ such $x$. By Lemma \ref{assume}, for each such $x \in (X \setminus Y) \cap (X \setminus X_z(C^*))$, we have $d^2(x, C^*) \leq 4r^2$.
		Thus, \[\sum\limits_{x \in (X \setminus Y) \cap (X \setminus X_z(C^*))} d^2(x,C^*) \leq z(4r^2) = 8 f_z^X(C^*).\]
	\end{enumerate}

	We conclude that $f^{X \setminus Y}(C^*) \leq f_z^X(C^*) + 8f_z^X(C^*) = 9 f_z^X(C^*)$, as required.
\end{proof}

\section{Analysis of Coreset Construction and Near Linear Time Algorithm}

The goal of this section is to prove Theorems \ref{thmcoreset} and \ref{lineartime}. In our proof, we will use Theorems \ref{kmcore} and \ref{samplecore}. For proofs of these theorems, see Sections \ref{kmproof} and \ref{sampleproof}.

\begin{proof}[Proof of Theorem \ref{thmcoreset}]
    We consider $2$ cases: $p > 1$ and $p \leq 1$.
    
    If $p > 1$, then $Y = \km(X,32(k + z))$. Because $p >1$, we have $\max(36\log(\frac{4nk^2}{z}), 36k \log (2k^3)) > z \Rightarrow z = O(k \log n)$. Then $\lvert Y \rvert = O(k + z) = O(k\log n)$, as required. Further, by Theorem \ref{kmcore}, $(Y,k,z)$ is a $(1, 124)$-coreset for $(X,k,z)$ with constant probability.
    
    Otherwise, if $p \leq 1$, then $Y = \km(S, 32(k + 2.5pz))$. Thus, $\lvert Y \rvert = O(k +pz) = O(k \log n)$, as required. By Theorem \ref{samplecore}, with probability at least $1 - \frac{1}{k^2}$, $(S,k,2.5 pz)$ is an $(16, 29)$-coreset for $(X,k,z)$. For the remainder of this analysis, we assume this condition holds. We also know that $(Y,k, 2.5 pz)$ is a $(1, 124)$-coreset for $(S,k, 2.5 pz)$ with constant probability. Assume this holds for the remainder of the analysis.
    
    Let $C$ be a set of $k$ centers satisfying $f^Y_{2.5 \kappa_1 pz}(C) \leq \kappa_2 Opt(Y,k, 2.5 pz)$.
	Because $(Y,k, 2.5 pz)$ is an $(1, 124)$-coreset for $(S,k, 2.5pz)$, this implies:
	\[f^S_{2.5\kappa_1 pz}(C) \leq 124 \kappa_2 Opt(S,k,2.5 pz)\]
	Because $(S,k, 2.5pz)$ is an $(16, 29)$-coreset for $(X,k,z)$, we conclude:
	\[f^X_{16 \kappa_1 z}(C) \leq 29 \cdot 124 \kappa_2 Opt(X,k,z)\]
	Thus $(Y,k, 2.5 pz)$ is an $(O(1), O(1))$-coreset for $(X,k,z)$.
\end{proof}

\begin{proof}[Proof of Theorem \ref{lineartime}]
    The approximation guarantees follow directly from Theorems~\ref{algmain} and~\ref{thmcoreset}.
    
    To analyze the runtime, note that we can compute $S$ in time $O(n)$. It is known that $\km(X,k)$ takes $O(kdn)$ time \citep{ArthurV07,AggarwalDK09}. Thus the runtime of $\framename$ is dominated by the runtime of $\km$ in both cases when $p >1$ and $p \leq 1$, which takes $O((k \log n) dn)$ time.
   
    Note that $Y$ has size $O(k \log n)$ in expectation, so by Lemma \ref{lemruntime}, $\algname$ can be implemented to run in time $O(k^2d \log^2 n) + T(k \log n)$ on $Y$ in expectation.
\end{proof}

\section{Proof of Theorem \ref{kmcore}}\label{kmproof}

Our proof of Theorem \ref{kmcore} relies on the following lemma which is implicit in \cite{AggarwalDK09}:

\begin{lemma}
	Let $Y = \km(X,32k)$. Then $f^X(Y) \leq 20 \cdot Opt(X,k)$ with probability at least $0.03$.
\end{lemma}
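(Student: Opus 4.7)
I would derive the stated probability bound by applying Markov's inequality to an expectation bound on $f^X(Y)$. Concretely, it suffices to exhibit a constant $c < 19.4$ with $\mathbb{E}[f^X(Y)] \leq c \cdot Opt(X,k)$ whenever $Y = \km(X, 32k)$; Markov's inequality then yields $\mathbb{P}[f^X(Y) > 20 \cdot Opt(X,k)] \leq c/20 < 0.97$, giving the claimed $0.03$ probability.

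To obtain the expectation bound, I would follow the Arthur--Vassilvitskii $D^2$-sampling potential argument in the over-seeded regime analyzed by Aggarwal--Deshpande--Kannan. Let $C_1^*, \ldots, C_k^*$ be the clusters of an optimal $k$-means solution with local costs $\phi_i^* \defeq \sum_{x \in C_i^*} d^2(x, c_i^*)$. Call $C_i^*$ \emph{hit} at iteration $t$ if the current seed set $S_t$ intersects it, and \emph{unhit} otherwise. Two standard facts drive the analysis: (i) if the next seed $s$ is sampled from an unhit cluster $C_i^*$ with $D^2$-probability, then $\mathbb{E}_s[\sum_{x \in C_i^*} \min\{d^2(x,s), d^2(x, S_t)\}] \leq 8 \phi_i^*$, so a single hit drops that cluster's contribution to a constant times its optimal cost; and (ii) whenever the current potential $\phi(S_t) \defeq \sum_x d^2(x, S_t)$ exceeds a suitable constant multiple of $Opt(X,k)$, a constant fraction of the $D^2$-mass lies on unhit clusters, so the next sample lands in a new cluster with constant probability.

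Iterating these two facts yields a one-step progress guarantee: in each round of $\km$, either $\phi(S_t)$ is already $O(Opt(X,k))$, or it decreases in expectation by a multiplicative constant. Because there are only $k$ optimal clusters but we run $\km$ for $32k$ rounds, the over-seeding factor of $32$ provides ample slack --- via a coupon-collector/Chernoff-type tail bound --- so that, in expectation, every optimal cluster is eventually hit and the residual potential is at most a small constant times $Opt(X,k)$. Combined with Markov at threshold $20$, this yields the $0.03$ probability.

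The main obstacle is numeric: one must carry the absolute constants cleanly through AV's $D^2$-sampling lemma, the per-round multiplicative progress bound, and the coupon-collector-style tail estimate so that $\mathbb{E}[f^X(Y)]/Opt(X,k)$ ends up strictly below the Markov threshold of $20$. Conceptually, this is the same over-seeded $k$-means++ analysis as in~\cite{AggarwalDK09}; the lemma here is essentially a Markov-ready repackaging of their in-expectation bound, with the factor of $32$ chosen precisely so that the resulting constant beats the $20$-threshold with room to spare.
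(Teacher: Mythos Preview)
The paper does not actually prove this lemma; it states it as ``implicit in~\cite{AggarwalDK09}'' and uses it as a black box. So there is no in-paper proof to compare against --- the paper's ``proof'' is simply the citation.

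That said, your route through an expectation bound plus Markov is a detour with a real gap. The Aggarwal--Deshpande--Kannan analysis yields a \emph{with-constant-probability} statement directly --- exactly the form of the lemma --- and does not establish $\mathbb{E}[f^X(Y)] \leq c\cdot Opt(X,k)$ for a universal constant $c$. Extracting an unconditional expectation bound from their argument is not immediate: in the (constant-probability) failure event the cost is controlled only by the potential after the first uniformly random seed, which can be arbitrarily large relative to $Opt(X,k)$, so one cannot simply integrate the ADK tail to get a finite expectation. A clean $O(1)$ expectation guarantee for over-seeded $\km$ was obtained only later (Wei, SODA~2016) via different bookkeeping. Your phrase ``in expectation, every optimal cluster is eventually hit'' also conflates two modes of reasoning: the coupon-collector/Chernoff step is inherently a \emph{probability} statement (all $k$ clusters are hit within $32k$ rounds with probability bounded away from zero), not an expectation statement, and your per-round ``multiplicative decrease in expectation'' does not compose with it in the way you suggest.

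Your ingredients (i) and (ii) are correct and are indeed the heart of the ADK argument. The simpler route --- and the one that is actually implicit in~\cite{AggarwalDK09} --- is to use them to obtain the probability bound directly: while the current potential exceeds a fixed constant times $Opt$, each new $D^2$-seed lands in a fresh optimal cluster with at least constant probability; a Chernoff bound over $32k$ rounds then gives constant probability that all $k$ clusters are hit, and conditioned on a cluster being hit its contribution is bounded via (i). This delivers the lemma's conclusion without ever needing a global expectation bound or the final Markov step.
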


\begin{corollary}
	Let $Y = \km(X,32(k + z))$. Then $f^X(Y) \leq 20 \cdot Opt(X,k,z)$ with probability at least $0.03$.
\end{corollary}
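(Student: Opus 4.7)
The plan is to reduce the outlier version to the no-outlier case by viewing each of the $z$ optimal outliers as its own singleton center, and then invoke the immediately preceding lemma with $k$ replaced by $k+z$.

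More concretely, the first step is to establish the inequality $Opt(X,k+z) \leq Opt(X,k,z)$. To prove this, let $C^*$ be an optimal $z$-solution for the instance $(X,k,z)$, let $Z^* = X \setminus X_z(C^*)$ be its set of $z$ discarded outliers, and consider the augmented center set $\tilde{C} = C^* \cup Z^*$. This has at most $k+z$ centers. When we compute the (no-outlier) $k$-means cost of $X$ with respect to $\tilde{C}$, each point in $X_z(C^*)$ contributes an assignment cost at most $d^2(x,C^*)$ (since $C^* \subseteq \tilde{C}$), and each point in $Z^*$ contributes zero because it coincides with itself as a center. Summing, $f^X(\tilde C) \leq f_z^X(C^*) = Opt(X,k,z)$, which gives $Opt(X,k+z) \leq Opt(X,k,z)$.

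The second step is to apply the preceding lemma with $k+z$ in place of $k$. Since $\km$ is run with $32(k+z)$ centers, the lemma directly yields that with probability at least $0.03$, the output $Y = \km(X, 32(k+z))$ satisfies $f^X(Y) \leq 20 \cdot Opt(X,k+z)$. Chaining with the bound from the first step gives $f^X(Y) \leq 20 \cdot Opt(X,k,z)$, as required.

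There is essentially no obstacle here, as the whole argument is one-line once the reduction $Opt(X,k+z) \leq Opt(X,k,z)$ is noted; the only thing to be careful about is that the preceding lemma is stated for $\km$ run with $32k$ centers on an instance whose parameter is $k$, so we must confirm that substituting the parameter $k+z$ (and thus $32(k+z)$ sampled centers) is a valid instantiation, which it is since the lemma holds for any $k \in \mathbb{N}$.
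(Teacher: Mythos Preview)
Your proof is correct and follows essentially the same approach as the paper: both arguments augment the optimal outlier solution's centers $C^*$ with the $z$ discarded outlier points to form a feasible $(k+z)$-center solution, thereby establishing $Opt(X,k+z)\le Opt(X,k,z)$, and then invoke the preceding lemma with $k+z$ in place of $k$.
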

\begin{proof}
	Let $C^*$ be the optimal solution to the $k$-means with outliers instance $(X,k,z)$. Note that $X \setminus X_{z}(C^*)$ is the set of outliers in the optimal solution, so $\lvert X \setminus X_z(C^*) \rvert \leq z$.
	
	Then we have $f^X_z(C^*) \geq f^X(C^* \cup (X \setminus X_z(C^*)) \geq Opt(X,k+z)$. Combining this inequality with the above lemma gives the desired result.
\end{proof}

Using the above corollary, we can prove Theorem \ref{kmcore} by a moving argument:

\begin{proof}[Proof of Theorem \ref{kmcore}]
	Let $Y = \km(X, 32(k+z))$ with weights $w$ as defined in the theorem statement. By the above corollary, we have $f^X(Y) \leq 20 \cdot Opt(X,k,z)$ with constant probability. We assume for the remainder of the proof that this condition holds.
	
	Let $C$ be any set of $k$ centers such that $f^Y_{\kappa_1 z}(C) \leq \kappa_2 Opt(Y,k,z)$. We wish to bound $f^X_{\kappa_1 z}(C)$.
	
	Note that by definition of $w$, $\sum\limits_{y \in Y} w(y) = n$, and each weight is an integer. Thus for the remainder of the proof we interpret $Y$ as a multiset such that for each $y \in Y$, there are $w(y)$ copies of $y$ in the multiset.
	
	It follows, we can associate each $x \in X$ with a unique $y(x) \in Y$ such that $d^2(x, y(x)) = d^2(x, Y)$ (so $y(x)$ is a unique copy of the center that $x$ is assigned to in the clustering of $X$ with centers $Y$.)
	
	Now we partition $X_{\kappa_1 z}(C)$ into two sets:
	\[X' \defeq \{x \in X \mid x \in X_{\kappa_1 z}(C), y(x) \in Y_{\kappa_1 z}(C)\}\]
	\[X'' \defeq \{x \in X \mid x \in X_{\kappa_1 z}(C), y(x) \notin Y_{\kappa_1 z}(C)\}\]
	
	For each $x \in X_{\kappa_1 z}(C)$, we want to bound its assignment cost.  There are two cases:
	\begin{enumerate}[1)]
		\item $x \in X'$:
		
		We can bound $d^2(x,C) \leq 2d^2(x, y(x)) + 2d^2(y(x), C)$. Note that $y(x) \in Y_{\kappa_1 z}(C)$, so we can bound the assignment cost $d^2(y(x), C)$.
		
		\item $x \in X''$:
		
		Note that because $y(x) \notin Y_{\kappa_1 z}(C)$, and $X_{\kappa_1 z}(C), Y_{\kappa_1 z}(C)$ are the same size, we can associate $y(x)$ with a unique element in $Y_{\kappa_1 z}(C)$, say $y(x') \in Y_{\kappa_1 z}(C)$ such that $x' \notin X_{\kappa_1 z}(C)$.
		
		Note that $x'$ is not assigned in the $\kappa_1 z$-solution $C$, but $x$ is assigned, so we can bound:
		\[d^2(x,C) \leq d^2(x', C) \leq 2d^2(x', y(x')) + 2d^2(y(x'), C)\]
	\end{enumerate}

	By summing over all $x \in X_{\kappa_1 z}(C)$ and applying the above bounds, we have:
	\begin{align*}
		f^X_{\kappa_1 z}(C) &= \sum\limits_{x \in X'} d^2(x,C) + \sum\limits_{x \in X''} d^2(x,C)\\
		&\leq 2\sum\limits_{x \in X'} (d^2(x,y(x)) + d^2(y(x), C))\\
		&\qquad + 2\sum\limits_{x \in X''} (d^2(x', y(x')) + d^2(y(x'),C))\\
		&=2f^X(Y) + 2f^Y_{\kappa_1 z}(C)\\
		&\leq 2 \cdot 20 \cdot Opt(X,k,z) + 2 \kappa_2 Opt(Y,k,z)
	\end{align*}
	An analogous argument gives that $Opt(Y,k,z) \leq 2 \cdot 20 \cdot Opt(X,k,z) + 2 \cdot Opt(X,k,z)$.
	
	We conclude that $f^X_{\kappa_1 z}(C) \leq (40 + 84\kappa_2)Opt(X,k,z)$, where we may assume $\kappa_2 \geq1$. This gives the desired result.
\end{proof}

\section{Proof of Theorem \ref{samplecore}}\label{sampleproof}

The proof of Theorem \ref{samplecore} closely follows \cite{meyerson2004k} and is given here for completeness. Note that the key difference is that rather than sampling elements uniformly from  $X$ with replacement as in \cite{meyerson2004k}, instead we sample each element of $X$ independently with probability $p$. In this section let $C^*=\{c_1^*, \dots, c_k^*\}$ be an optimal $z$-solution on $X$ with clusters $C_1^*, \dots, C_k^*$ such that $C_i^*$ is the set of points assigned to center $c_i^*$. Let $n_i \defeq \lvert C_i^* \rvert$ denote the size of cluster $i$.

Further, let $S$ be a sample drawn from $X$ of size $s$ as in $\framename$, and let $C$ a set of $k$ centers satisfying $f^S_{2.5\kappa_1 pz}(C) \leq \kappa_2 Opt(S,k, 2.5 pz)$ for some constants $\kappa_1, \kappa_2 > 0$.

The goal of this section is to prove that the sample $S$ gives a good coreset of $X$ for the $k$-means with outliers problem. We begin with some definitions that will be useful to our analysis:

\begin{definition}[Large/Small Clusters]
	We say a cluster $C_i^*$ of the optimal solution is \textbf{large} if $\lvert C_i^* \rvert \geq \frac{z}{k}$ and \textbf{small} otherwise.
\end{definition}

\begin{definition}[Covered/Uncovered Clusters]
	Let $A \defeq S_{2.5pz}(C^*) \cap S_{2.5\kappa_1 pz}(C)$, so $A$ is the set of points in $S$ that are assigned in the $2.5pz$-solution $C^*$ and in the $2.5\kappa_1 pz$-solution $C$.
	
	We say a large cluster $C_i^*$ is \textbf{covered} if $\lvert C_i^* \cap A \rvert \geq \frac{1}{2} \lvert C_i^* \cap S \rvert$ and \textbf{uncovered} otherwise.
\end{definition}

Intuitively, in our analysis we want to show that most of the large clusters are covered, because the large clusters make up the majority of the points. In order to obtain a good summary of the whole point set, it suffices to obtain a good summary of the large clusters.

We quantify this notion of a good summary by defining a division of $X$ into bins with respect to the centers, $C^*$.

\begin{definition}[Bin Division]
	Let $b \in \mathbb{N}$. The $b$-bin division of $X$ with respect to a set of $k$ centers, $C$, is a partition $B_1, \dots B_b$ of $X$ such that $B_1$ contains the $\frac{n}{b}$ points in $X$ with the smallest assignment costs to $C^*$, $B_2$ contains the next $\frac{n}{b}$ cheapest points, and so on. More formally, the partition $B_1, \dots, B_b$ satisfies:
	
	\begin{itemize}
		\item $\lvert B_i \rvert = \frac{n}{b}$ for all $i \in [b]$
		\item $\max\limits_{x \in B_i} d^2(x, C) \leq \min\limits_{x \in B_{i+1}} d^2(x,C)$ for all $i \in [b-1]$
	\end{itemize}
\end{definition}

For the remainder of this section, let $B_1, \dots, B_b$ denote the bin division of $X$ with respect to the optimal $z$-solution $C^*$, where $b = \frac{n}{z}$ (so each bin has size $z$.)

The following lemma shows that our sample size is sufficiently large to obtain a good representation of each bin and each large cluster.

\begin{lemma}\label{lemmacond}
	With probability at least $1 - \frac{1}{k^2}$, the following both hold:
	\begin{enumerate}[1)]
		\item For all $i \in [b]$, $\lvert S \cap B_i \rvert \in [0.75 pz, 1.25 pz]$\label{condbin}
		\item For every large cluster $C_i^*$, $\lvert S \cap C_i^* \rvert \geq 0.75 pn_i$\label{condlarge}
	\end{enumerate}
\end{lemma}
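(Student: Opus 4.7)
The plan is to prove each of the two stated conditions separately using a multiplicative Chernoff bound, and then combine them via a union bound. The key observation is that the bin partition $B_1,\dots,B_b$ and the optimal clusters $C_1^*,\dots,C_k^*$ are deterministic subsets of $X$ (they depend only on $X$ and $C^*$, not on the sample $S$), so for each fixed subset $T \subseteq X$ the random variable $|S \cap T|$ is a sum of $|T|$ independent Bernoulli$(p)$ trials. In particular $\mathbb{E}[|S \cap B_i|] = pz$ (since $|B_i| = z$) and $\mathbb{E}[|S \cap C_i^*|] = pn_i$, and both quantities are tightly concentrated around their means.

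For condition (1), fix a bin $B_i$ and apply the two-sided multiplicative Chernoff inequality with deviation $\delta = 1/4$ to obtain
\[\Pr\bigl[|S\cap B_i| \notin [0.75\, pz,\, 1.25\, pz]\bigr] \;\leq\; 2\exp(-\Theta(pz)).\]
Using the first term $p \geq \tfrac{36}{z}\log(\tfrac{4nk^2}{z})$ in the definition of $p$, we have $pz \geq 36\log(\tfrac{4nk^2}{z})$, so each bin's failure probability is polynomially small in $\tfrac{z}{nk^2}$. A union bound over the $b = n/z$ bins then contributes at most $\tfrac{1}{2k^2}$ to the total failure probability.

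For condition (2), fix a large cluster $C_i^*$, which by definition satisfies $n_i \geq z/k$. The lower-tail Chernoff inequality with $\delta = 1/4$ gives
\[\Pr\bigl[|S\cap C_i^*| < 0.75\, pn_i\bigr] \;\leq\; \exp(-\Theta(pn_i)).\]
The second term $p \geq \tfrac{36k}{z}\log(2k^3)$ combined with $n_i \geq z/k$ yields $pn_i \geq 36\log(2k^3)$, so the failure probability for each large cluster is polynomially small in $\tfrac{1}{k^3}$. Union-bounding over the at most $k$ large clusters contributes at most $\tfrac{1}{2k^2}$ to the total failure probability.

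A final union bound of the two events gives the claimed $1 - \tfrac{1}{k^2}$ success probability. There is no substantive obstacle in this argument; the only care required is to match the constants in the Chernoff exponents with the constant $36$ appearing in the sampling probability. Indeed, the definition of $p$ contains exactly two terms, each precisely calibrated so that its corresponding union bound (over the $n/z$ bins, respectively the $\leq k$ large clusters) has failure probability at most $\tfrac{1}{2k^2}$.
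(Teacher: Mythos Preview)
Your proposal is correct and matches the paper's proof essentially line for line: the paper also applies the two-sided Chernoff bound with $\delta=1/4$ to each bin (getting exponent $-\tfrac{1}{36}pz$) and the lower-tail bound to each large cluster (getting exponent $-\tfrac{1}{36}pn_i \leq -\tfrac{1}{36}p\tfrac{z}{k}$), then union-bounds exactly as you describe so that each of the two terms in the definition of $p$ makes its respective sum at most $\tfrac{1}{2k^2}$.
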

\begin{proof}
    We will use the following standard Chernoff bounds, where $X = \sum\limits_{ i \in [n]} X_i$ is the sum of $n$ i.i.d. random variables $X_i \sim Ber(p)$. For any $\delta \geq 0$, $Pr(\lvert X - pn \rvert \geq \delta pn) \leq 2exp(\frac{-\delta^2 pn}{2 + \delta})$ and $Pr(X \geq (1 - \delta) pn) \leq exp(\frac{-\delta^2 pn}{2 + \delta})$. 
    
    We bound the failure probability for each bin and each large cluster. For all $i \in [b]$, we have $Pr( \lvert \lvert S \cap B_i\rvert - pz \rvert \geq \frac{1}{4} pz) \leq exp(-\frac{1}{36} pz)$. For all large clusters $C^*_i$, we have $Pr(\lvert S \cap C^*_i \rvert \leq (1 - \frac{1}{4}) pn_i) \leq exp (-\frac{1}{36} p n_i) \leq exp(-\frac{1}{36}p\frac{z}{k})$.
    
    Now by union bounding over the failure events for each bin and large cluster, the probability that Condition \ref{condbin} or \ref{condlarge} does not hold is upper bounded by:
    \[\frac{n}{z} 2 \cdot exp(-\frac{1}{36} pz) + k \cdot exp(-\frac{1}{36}p\frac{z}{k})\]
    Because $p \geq \frac{36}{z} \log (\frac{4nk^2}{z})$ and $p \geq 36 \frac{k}{z} \log (2k^3)$, the first and second terms are both upper bounded by $\frac{1}{2k^2}$, respectively.
\end{proof}

For the remainder of this section, we assume that both Conditions \ref{condbin} and \ref{condlarge} hold. Now we will formalize the idea that it suffices to get a good representation of the large clusters, because we can simply throw away the remaining points as outliers by increasing the number of outliers by a constant factor.

\begin{lemma}\label{lemmacover}
	Let $X' = X_{covered}$, where $X_{covered}$ is the union of all covered large clusters (so $X'$ excludes all small clusters, all uncovered clusters, and all outliers in the optimal $z$-solution on $X$.)
	
	Then $f^X_{(9 + 7 \kappa_1)z}(C) \leq f^{X'}(C)$.
\end{lemma}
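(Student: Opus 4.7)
The plan is to establish the stronger combinatorial claim $|X \setminus X'| \leq (9 + 7\kappa_1)z$; the lemma then follows immediately, because $f^X_{(9+7\kappa_1)z}(C)$ is by definition the minimum of $\sum_{x \in T} d^2(x,C)$ over subsets $T \subseteq X$ with $|T| = n - (9+7\kappa_1)z$, and $|X'| \geq n - (9+7\kappa_1)z$ lets us choose such a $T$ inside $X'$; since all assignment costs are non-negative, $\sum_{x \in T} d^2(x,C) \leq \sum_{x \in X'} d^2(x,C) = f^{X'}(C)$.

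So the real task is counting. Decompose $X \setminus X'$ into three disjoint pieces: (a) the $z$ outliers of the optimal $z$-solution on $X$; (b) the union of small clusters, each of size $< z/k$, contributing at most $k \cdot (z/k) = z$ points in total; and (c) the union of uncovered large clusters. Pieces (a) and (b) together give at most $2z$, so the interesting work is bounding (c).

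For (c), I would first bound the total \emph{sample} mass in uncovered large clusters, and then lift to the full population using Condition (\ref{condlarge}) of Lemma \ref{lemmacond}. By definition of uncovered, each such cluster satisfies $|C_i^* \cap (S \setminus A)| > \tfrac{1}{2}|C_i^* \cap S|$, so summing over the (disjoint) uncovered large clusters gives $\tfrac{1}{2}\sum_i |C_i^* \cap S| < |S \setminus A|$. Writing $S \setminus A \subseteq (S \setminus S_{2.5pz}(C^*)) \cup (S \setminus S_{2.5\kappa_1 pz}(C))$ and taking sizes yields $|S \setminus A| \leq 2.5(1+\kappa_1)pz$, hence $\sum_i |C_i^* \cap S| < 5(1+\kappa_1)pz$. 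Condition (\ref{condlarge}) gives $|C_i^* \cap S| \geq 0.75\, p\, n_i$ for every large cluster, so $\sum_i n_i \leq \tfrac{1}{0.75 p}\sum_i |C_i^* \cap S| < \tfrac{20(1+\kappa_1)}{3} z$.

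Putting the three bounds together, $|X \setminus X'| \leq 2z + \tfrac{20(1+\kappa_1)}{3}z = \tfrac{26 + 20\kappa_1}{3}z$, and a one-line calculation verifies $\tfrac{26 + 20\kappa_1}{3} \leq 9 + 7\kappa_1$ (equivalent to $0 \leq 1 + \kappa_1$, which holds since $\kappa_1 > 0$). The main technical step is the uncovered-cluster accounting: one must recognize that the symmetric-difference slack $S \setminus A$ is controlled by a union bound over the two separate outlier sets (size $2.5pz$ and $2.5\kappa_1 pz$), and then translate this sample-level bound to a population-level bound via Condition (\ref{condlarge}). Everything else is bookkeeping.
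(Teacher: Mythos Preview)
Your proposal is correct and follows essentially the same route as the paper: bound $|X\setminus X'|$ by separately counting optimal outliers ($\leq z$), small clusters ($\leq z$), and uncovered large clusters (using the uncovered condition to get $|X_{uncovered}\cap S|\leq 2|S\setminus A|\leq 5(1+\kappa_1)pz$, then Condition~\ref{condlarge} to lift to $|X_{uncovered}|\leq \tfrac{20}{3}(1+\kappa_1)z$), and finally check $2+\tfrac{20}{3}(1+\kappa_1)\leq 9+7\kappa_1$. Your explicit justification of why the combinatorial bound implies $f^X_{(9+7\kappa_1)z}(C)\leq f^{X'}(C)$ is a nice addition that the paper leaves implicit.
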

\begin{proof}
	It suffices to show that $\lvert X_{small} \cup (X \setminus X_z(C^*)) \cup X_{uncovered} \rvert \leq (9 + 7\kappa_1)z$.
	
	By definition $\lvert X_{small}\rvert \leq z$ (because each small cluster has at most $\frac{z}{k}$ points and there are at most $k$ small clusters), and $\lvert X \setminus X_z(C^*) \rvert \leq z$, so it remains to bound the size of $X_{uncovered}$.
	
	Recall that $A = S_{2.5pz}(C^*) \cap S_{2.5\kappa_1 pz}(C)$, so $\lvert A \rvert \geq s - 2.5pz - 2.5 \kappa_1 pz$.  This implies $\lvert S \setminus A \rvert \leq 2.5pz(1 + \kappa_1)$.
	
	By definition, a cluster $C_i^*$ is uncovered if $\lvert C_i^* \cap A \rvert < \frac{1}{2} \lvert C_i^* \cap S \rvert$.  This implies $\lvert C_i^* \cap S \rvert \leq 2 \lvert C_i^* \cap (S \setminus A) \rvert$
	
	By summing over all uncovered clusters, we have:
	\begin{eqnarray*} &\lvert X_{uncovered} \cap S \rvert \\
	&\leq 2 \lvert X_{uncovered} \cap (S \setminus A)\rvert \leq 2 \lvert S \setminus A \rvert \leq 5pz (1 + \kappa_1)
	\end{eqnarray*}
	
	Further, by Condition \ref{condlarge}, for every large cluster $C_i^*$, we have $\lvert S \cap C_i^* \rvert \geq 0.75 p n_i$, which gives $n_i \leq \frac{4}{3}\frac{1}{p} \lvert S \cap C_i^* \rvert$. This holds for every uncovered cluster, so we can upper bound:
	\[\lvert X_{uncovered} \rvert \leq \frac{4}{3} \frac{1}{p} \lvert X_{uncovered} \cap S \rvert \leq \frac{20}{3}z(1 + \kappa_1)\]
	Combining these bounds gives:
 \begin{eqnarray*}&&\lvert X_{small} \cup X_{uncovered} \cup (X \setminus X_z(C^*)) \rvert \\
 &\leq& z + z + \frac{20}{3}z(1 + \kappa_1)\\
 &=& (2 + \frac{20}{3}(1 + \kappa_1))z \leq (9 + 7\kappa_1)z
	\end{eqnarray*}
\end{proof}
	
Now it suffices to bound the cost of clustering all the covered clusters, which we do so with a standard moving argument:
	
\begin{lemma}\label{lemmamove}
	$f^{X'}(C) \leq 2f^X_z(C^*) + \frac{32}{3} \frac{1}{p}(f^S_{2.5 pz}(C^*) + f^S_{2.5\kappa_1 pz}(C))$
\end{lemma}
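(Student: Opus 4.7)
The plan is a standard moving argument in which each point $x$ in a covered large cluster $C_i^*$ is routed first through its optimal center $c_i^*$ and then through an arbitrary sample point $s \in C_i^* \cap A$, applying the approximate triangle inequality twice.

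Concretely, for any $x \in C_i^*$ (with $C_i^*$ covered) and any $s \in C_i^* \cap A$, two applications of the approximate triangle inequality give
\[
d^2(x,C) \;\leq\; 2\,d^2(x,c_i^*) + 2\,d^2(c_i^*, C) \;\leq\; 2\,d^2(x,c_i^*) + 4\,d^2(c_i^*, s) + 4\,d^2(s, C).
\]
Since this inequality holds for every $s \in C_i^* \cap A$, I would average the right-hand side over $s \in C_i^* \cap A$ (which only affects the last two terms, as $d^2(x,C)$ and $d^2(x,c_i^*)$ do not depend on $s$) and then sum over $x \in C_i^*$, obtaining
\[
\sum_{x \in C_i^*} d^2(x,C) \;\leq\; 2\sum_{x \in C_i^*} d^2(x,c_i^*) + \frac{4\,n_i}{|C_i^* \cap A|}\!\sum_{s \in C_i^* \cap A}\!\bigl(d^2(c_i^*,s) + d^2(s,C)\bigr).
\]

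The key quantitative step is bounding the amortization factor $n_i/|C_i^* \cap A|$. The covered property gives $|C_i^* \cap A| \geq \tfrac{1}{2}|C_i^* \cap S|$, and Condition~\ref{condlarge} of Lemma~\ref{lemmacond} (applicable because $C_i^*$ is large) gives $|C_i^* \cap S| \geq 0.75\,p n_i$. Hence $n_i/|C_i^* \cap A| \leq 8/(3p)$, turning the prefactor on the sample terms into $32/(3p)$.

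Finally, summing over all covered large clusters, I would identify the three resulting sums with the quantities in the claim: $\sum_{i}\sum_{x \in C_i^*} d^2(x,c_i^*) \leq f_z^X(C^*)$, since $d(x,c_i^*) = d(x,C^*)$ for any $x$ assigned to $c_i^*$ in the optimal and the covered large clusters sit inside $X_z(C^*)$; $\sum_{i}\sum_{s \in C_i^* \cap A} d^2(c_i^*,s) \leq f_{2.5pz}^S(C^*)$, since $A \subseteq S_{2.5pz}(C^*)$ and $d(s,c_i^*) = d(s,C^*)$ for $s \in C_i^*$; and $\sum_{i}\sum_{s \in C_i^* \cap A} d^2(s,C) \leq f_{2.5\kappa_1 pz}^S(C)$, since $A \subseteq S_{2.5\kappa_1 pz}(C)$. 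Adding these gives the claimed inequality. The only delicate point is the order of the two triangle expansions: inserting $c_i^*$ first (so its distance to $x$ contributes with coefficient $2$) and only then expanding $d^2(c_i^*,C)$ through $s$ (whose contributions are $x$-independent and therefore amortized by $1/|C_i^* \cap A|$) is what keeps the coefficient of $f_z^X(C^*)$ equal to $2$ rather than $4$.
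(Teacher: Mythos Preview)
Your proposal is correct and follows essentially the same argument as the paper: route each $x$ through its optimal center $c_i^*$, then bound $d^2(c_i^*,C)$ by averaging over the sample points in $C_i^*\cap A$, and use the covered property together with Condition~\ref{condlarge} to get the $\tfrac{32}{3p}$ factor. The paper's presentation separates the two triangle-inequality steps (first writing $f^{X'}(C)\le 2f^{X'}(C^*)+2\sum_i n_i\,d^2(c_i^*,C)$, then averaging to bound $d^2(c_i^*,C)$), whereas you compose them before averaging, but the computations and constants are identical.
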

\begin{proof}
	By the approximate triangle inequality:
	\[f^{X'}(C) \leq 2f^{X'}(C^*) + 2 \sum\limits_{i \in [k]} n_i d^2(c_i^*, C),\]
	 where the first term accounts for moving each point in $X'$ to its closest center in $C^*$, and the second term accounts for moving each point from its respective center in $C^*$ to the nearest center in $C$.
	
	Note that $X' \subset X_z(C^*)$, so $f^{X'}(C^*) \leq f^{X_z(C^*)}(C^*) = f^X_z(C^*)$.
	
	It remains to bound $2\sum\limits_{i \in [k]} n_i d^2(c_i^*, C)$. By a standard averaging argument, for any covered cluster $C_i^*$:
	\[d^2(c_i^*, C) \leq \frac{1}{\lvert C_i^* \cap A \rvert} \sum\limits_{x \in C_i^* \cap A } (2d^2(x, c_i^*) + 2d^2(x, C))\]
	Because $C_i^*$ is covered, $\frac{1}{\lvert C_i^* \cap A \rvert} \leq \frac{2}{\lvert C_i^* \cap S \rvert}$. Further, by Condition \ref{condlarge}, we have $\lvert C_i^* \cap S \rvert \geq 0.75 p n_i$.
	
	Using these results, we can bound $2\sum\limits_{i \in [k]} n_i d^2(c_i^*, C) \leq \frac{32}{3} \frac{1}{p} \sum\limits_{i \in [k]} (Q_i + R_i)$, where we define $Q_i \defeq \sum\limits_{x \in C_i^* \cap A} d^2(x, c_i^*)$ and $R_i \defeq \sum\limits_{x \in C_i^* \cap A} d^2(x, C)$.
	
	Recall that $A = S_{2.5pz}(C^*) \cap S_{2.5\kappa_1 pz}(C)$, so:
	\begin{eqnarray*}
	&&\sum\limits_{i \in [k]} Q_i = \sum\limits_{x \in X' \cap A} d^2(x, C^*) \leq \sum\limits_{x \in A} d^2(x,C^*) \\
	&&\leq f^S_{2.5pz}(C^*)
	\end{eqnarray*}
	Analogously we can show $\sum\limits_{i \in [k]} R_i \leq f^S_{2.5\kappa_1 pz}(C)$. Combining these bounds gives the desired result.
\end{proof}

Note that by definition of $C$, we have $f^S_{2.5\kappa_1 pz}(C) \leq \kappa_2 Opt(S,k, 2.5 pz) \leq \kappa_2 f^S_{2.5 pz}(C^*)$.

We require one more lemma to prove Theorem \ref{samplecore}, because we must relate the $2.5 pz$-cost of clustering $S$ with $C^*$ to the $z$-cost of clustering $X$ with $C^*$. To do this we use the fact that the bins are approximately equally-represented:

\begin{lemma}\label{lemmabin}
	$f^S_{2.5 pz}(C^*) \leq 1.25 p f^X_z(C^*)$
\end{lemma}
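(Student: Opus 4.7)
The plan is to exploit the bin structure: since each bin has size exactly $z$ and is ordered by assignment cost to $C^*$, the $z$ outliers in the optimal $z$-solution on $X$ are exactly $B_b$, giving the clean identity $f^X_z(C^*) = \sum_{i=1}^{b-1} \sum_{x \in B_i} d^2(x,C^*)$. The goal is to show that $f^S_{2.5pz}(C^*)$ is concentrated on the first $b-2$ bins, and then compare bin $j$ of $S$ to bin $j+1$ of $X$.

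First I would argue that $S_{2.5pz}(C^*) \subseteq S \cap (B_1 \cup \cdots \cup B_{b-2})$. By Condition \ref{condbin} of Lemma \ref{lemmacond}, $|S \cap B_b| + |S \cap B_{b-1}| \leq 1.25pz + 1.25pz = 2.5pz$. Because points in $B_b$ and $B_{b-1}$ have the largest assignment costs among all of $S$ (by the ordering property of the bin division), the $2.5pz$ most expensive points of $S$ contain all of $S \cap (B_{b-1} \cup B_b)$, so the surviving $S_{2.5pz}(C^*)$ lies entirely in the lower $b-2$ bins.

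Next I would perform a one-bin shift on each surviving bin's contribution. For each $j \in \{1,\dots,b-2\}$, the defining property of the bin division gives $\max_{y \in B_j} d^2(y,C^*) \leq \min_{y \in B_{j+1}} d^2(y,C^*) \leq \frac{1}{z}\sum_{y \in B_{j+1}} d^2(y,C^*)$. Combined with the upper bound $|S \cap B_j| \leq 1.25pz$ from Condition \ref{condbin}, this yields
\[
\sum_{x \in S \cap B_j} d^2(x,C^*) \;\leq\; 1.25pz \cdot \tfrac{1}{z}\sum_{y \in B_{j+1}} d^2(y,C^*) \;=\; 1.25p \sum_{y \in B_{j+1}} d^2(y,C^*).
\]

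Finally, summing over $j = 1,\dots,b-2$ shifts the index to $j' = 2,\dots,b-1$, a subset of the bins contributing to $f^X_z(C^*)$:
\[
f^S_{2.5pz}(C^*) \;\leq\; \sum_{j=1}^{b-2} \sum_{x \in S \cap B_j} d^2(x,C^*) \;\leq\; 1.25 p \sum_{j'=2}^{b-1} \sum_{y \in B_{j'}} d^2(y,C^*) \;\leq\; 1.25 p\, f^X_z(C^*),
\]
which is the claim. The only subtlety — and thus the main thing to get right — is the off-by-one bookkeeping: using $2.5pz$ outliers on $S$ (rather than, say, $pz$) is exactly what lets us drop both $B_{b-1}$ and $B_b$ from the sample, so that after the index shift the summation stays within the bins $B_2,\dots,B_{b-1}$ that are paid for by $f^X_z(C^*)$.
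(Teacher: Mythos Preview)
Your proof is correct and matches the paper's argument essentially step for step: use Condition~\ref{condbin} to confine $S_{2.5pz}(C^*)$ to the first $b-2$ bins, then apply the one-bin shift $f^{S\cap B_j}(C^*)\le 1.25p\,f^{B_{j+1}}(C^*)$ via the max/min ordering of the bin division, and sum. The only cosmetic difference is that the paper phrases the per-bin bound as a ``charging'' argument rather than routing through the min/average inequality, but the underlying computation is identical.
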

\begin{proof}
	By Condition \ref{condbin}, no bin contributes more than $1.25 pz$ elements to $S$, so $S_{2.5 pz}(C^*)$ excludes all of $S \cap B_{b-1}$, $S \cap B_b$.
	
	Thus $S_{2.5 pz}(C^*) \subset \bigcup_{i \in [b-2]} S \cap B_i$, so we have $f^S_{2.5 pz}(C^*) \leq \sum\limits_{i \in [b-2]} f^{S \cap B_i}(C^*)$.
	
	Further, by definition of the bin division $B_1, \dots, B_b$, we have $f^X_z(C^*) = \sum\limits_{i \in [b-1]} f^{B_i}(C^*)$, and for any $i \in [b-2]$:
	\[\max\limits_{x \in S \cap B_i} d^2(x,C^*) \leq \min\limits_{x \in B_{i+1}} d^2(x,C^*)\]
	Our strategy will be to charge each point in $S \cap B_i$ to a point in $B_{i+1}$. Observe $\lvert B_i \rvert = z$ for all $i$ and by Condition \ref{condbin}, $\lvert S \cap B_i \rvert \leq 1.25 pz$. This implies $f^{S \cap B_i}(C^*) \leq 1.25 p f^{B_{i+1}}(C^*)$ for all $i \in [b-2]$.
	
	We conclude:
	\begin{eqnarray*}&& f^S_{2.5 pz}(C^*) \leq \sum\limits_{i \in [b-2]} f^{S \cap B_i}(C^*) \\
	&\leq& 1.25 p \sum\limits_{i \in [b-2]} f^{B_{i+1}}(C^*) \leq 1.25 p f^X_z(C^*)
	\end{eqnarray*}
\end{proof}

Now we are ready to put these lemmas together to prove Theorem \ref{samplecore}:

\begin{proof}[Proof of Theorem \ref{samplecore}]
	By Lemma \ref{lemmacond}, with probability at least $1 - \frac{1}{k^2}$, both Conditions \ref{condbin} and \ref{condlarge} hold. For the remainder of the proof, suppose both conditions hold.
	
	Now by chaining Lemmas \ref{lemmacover} and \ref{lemmamove}, we have:
	\begin{eqnarray*} &&f^X_{(9 + 7 \kappa_1)z}(C) \leq f^{X'}(C) \leq 2f^X_z(C^*)\\
	&&+ \frac{32}{3} \frac{1}{p}(f^S_{2.5pz}(C^*) + f^S_{2.5\kappa_1 pz}(C))
	\end{eqnarray*}
	Applying the definition of $S$, $f^S_{2.5\kappa_1 pz} (C) \leq \kappa_2 f^S_{2.5 pz}(C^*)$:
		\begin{eqnarray*}&&f^X_{(9 + 7 \kappa_1)z}(C) \leq f^{X'}(C) \leq 2f^X_z(C^*)\\
		&&+ \frac{32}{3} \frac{1}{p}(1 + \kappa_2)f^S_{2.5 pz}(C^*)	\end{eqnarray*}
	Finally, we apply Lemma \ref{lemmabin} to obtain:
	\begin{align*}
		f^X_{(9 + 7 \kappa_1)z}(C) &\leq 2f^X_z(C^*) + \frac{32}{3} \frac{1}{p}(1 + \kappa_2)(1.25 p f^X_z(C^*))\\
		&=(\frac{46}{3} + \frac{40}{3} \kappa_2) Opt(X,k,z)
	\end{align*}
	We may assume $\kappa_1, \kappa_2 \geq 1$, which completes the proof.
\end{proof}

\section{Other Experiment Results}
    \label{sec:app-exp}

\subsection{Algorithms Implemented}

We discuss each algorithm's implementation in more detail. When we ran $k$-means++, Lloyd's, $k$-means--, we terminated the execution when the objective improves less than a 1.00001 factor. 

\begin{enumerate}
    \item \algname~ (plus $k$-means++ on coreset). We added \algname~to $k$-means++ as a preprocessing step. See Algorithm~\ref{alg:nk} for its pseudo-code. Since we had to guess the value of $Opt$, we considered all possible values that are power of 2 in the range of $[n \min_{u, v \in X} d^2(u, v), n \max_{u, v \in X} d^2(u, v)]$. Occasionally, when $z$ was almost as big as $n/ k$, \algname~discarded almost all points -- such cases were considered as failure. However, if the guessed value of $Opt$ is sufficiently large, \algname~discards no points. Therefore, essentially this algorithm should be as good as running $k$-means++ on the coreset directly. 
    
    \item $k$-means++ (on the original input). The coreset is not used in this algorithm. So, we run $k$-means++ on the original input.
    
    \item $k$-means++ (on coreset). This algorithm runs $k$-means++ on the coreset. 
    \item Primal-dual of \cite{CharikarKMN01} (on coreset). The primal-dual algorithm \cite{CharikarKMN01} is executed on the coreset. This algorithm is quite involved, and therefore, we only provide the parameters we chose to run the algorithm. For the whole algorithm description, see
    \cite{CharikarKMN01}. The algorithm requires us to guess the value of $Opt$. As in the implementation of \algname, we considered all possible values that are power of 2 in the range of $[n \min_{u, v \in X} d^2(u, v), n \max_{u, v \in X} d^2(u, v)]$. This algorithm is based on a reduction to the facility location problem where one is allowed to choose as many centers as needed, but has to pay a (uniform) cost for using each center. Thus, another binary search is needed on the facility (center) opening cost.     Each outlier cost is set to $Opt / (2z)$.
    
    \item Uniform Sample (conservative uniform sampling plus $k$-means++): $k$-means++ was executed on a uniform sample consisting of points sampled with probability $1 / (2z)$.
    %    A uniform sample was obtained by sampling each point with probability $1 / (2z)$, and then,  on the sample. 
    \item $k$-means-- \cite{ChawlaG13} on coreset. This algorithm is a variant of  the Lloyd's algorithm that executes each iteration of Lloyd's excluding the farthest $z$ points. That is, the algorithm repeats the following: it bring back all input points, excludes the farthest $z$ points from the current centers, reassigns each remaining point to the closest center, and then recomputes the center of each cluster.

    \item Local search of \cite{GuptaKLMV17} (on coreset). In principle, this algorithm may end up with discarding $\Omega(z k \log n)$ points. However, it was observed that it never discarded more than $2z$ points in experimentation. We adopt the practical implementation of the algorithm described in \cite{GuptaKLMV17}. When the algorithm converges we enforce the farthest $z$ points to be the outliers. 
        %Therefore, we chose a practical implementation of the algorithm, which was running $k$-means++, discarding $z$ outliers, and then running $k$-means++ again on the remaining data. 
\end{enumerate}

\subsection{Experiment Results}

In this section we present all experiment results. 

%\subsubsection{Synthetic Data Sets}
%%%%%%%%%%%%%%%%%%%%%%%%
%running time
%%%%%%%%%%%%%%%%%%%%%%%%
%\begin{center}
\begin{table*}
%\begin{center}
\begin{tabular}{r|r|r|r|r|r|r|r}
\hline

\multirow{2}{*}{($d,k,z$)} & \algname & Primal &coreset $k$- & original & Local &  $k$- & Uniform\\ & & Dual &means++ & $k$-means++ &Search &means-- & Sample\\

% ($d,k,z$)& \algname & Primal-Dual & S+Lloyd & vanilla & Local Search & LSO & Uniform Sample  \\
 \hline 
  
(10,10,10K) & 333 & $>$ 4hrs & 223 & 110 & 347 & 357 & 14

  \\
 \hline
  
(10,10, 50K) & 89 & 5400 & 33 & 126 & 143 & $>$ 4hrs & 14

 \\
 \hline 
  
(10, 20, 10K) & 864 & $>$ 4hrs & 667 & 249 & 3712 & $>$ 4hrs & 20

  \\
\hline
 
(10, 20, 50K) & 214 & $>$ 4hrs & 102 & 718 & 3355 & $>$ 4hrs & 20
\\

\hline
 
(20, 10, 10K) & 6576 & 5759 & 306 & 141 & 500 & 519 & 24
\\
\hline
 
(20, 10, 50K) & 145 & 5855 & 58 & 180 & 269 & $>$ 4hrs & 24
\\
 \hline
(20, 20, 10K) & 1590 & $>$ 4hrs & 1232 & 270 & 6698 & 7787 & 36
\\
\hline
 
(20, 20, 50K) & 361 & $>$ 4hrs & 203 & 1173 & 5278 & $>$ 4hrs & 36
\\
\hline

\end{tabular}
%\end{center}
\caption {The running time (sec.) for synthetic data sets with noise sampled from $[-1/2, 1/2]^d$.}
\label{table:failure}
\end{table*}
%\end{center}

%\begin{center}
\begin{table*}
%\begin{center}
\begin{tabular}{r|r|r|r|r|r|r|r}
\hline

\multirow{2}{*}{($d,k,z$)} & \algname & Primal &coreset $k$- & original & Local & $k$- & Uniform\\ & & Dual &means++ & $k$-means++ &Search &means-- & Sample\\
 \hline 
  
(10,10,10K) &  285 & 11271 & 129 & 406 & 247 & 4039 & 13

  \\
 \hline
  
(10,10, 50K) &  126 & 7638 & 35 & 772 & 154 & 154 & 14
 \\
 \hline 
  
(10, 20, 10K) &  860 & $>$ 4hrs & 585 & 900 & 3966 & 3970 & 20
  \\
\hline
 
(10, 20, 50K) &  280 & $>$ 4hrs & 103 & 1742 & $>$ 4hrs & $>$ 4hrs & 20

\\

\hline
 
(20, 10, 10K) &  415 & 13973 & 153 & 557 & 328 & 5658 & 25

\\
\hline
 
(20, 10, 50K) &  220 & 11356 & 60 & 1032 & 269 & $>$ 4hrs & 25

\\
 \hline
(20, 20, 10K) &  1235 & $>$ 4hrs & 742 & 1050 & 6278 & 6629 & 36

\\
\hline
 
(20, 20, 50K) &  474 & $>$ 4hrs & 184 & 2079 & 4967 & $>$ 4hrs & 36

\\
\hline

\end{tabular}
%\end{center}
\caption {The running time (sec.) for synthetic data sets with noise sampled from $[-5/2, 5/2]^d$.}
\label{table:failure}
\end{table*}
%\end{center}

%%%%%%%%%%%%%%%%%%%%%%%%
%objective
%%%%%%%%%%%%%%%%%%%%%%%%
%\begin{center}
\begin{table*}
%\begin{center}
\begin{tabular}{r|r|r|r|r|r|r|r}
\hline

\multirow{2}{*}{($d,k,z$)} & \algname & Primal &coreset $k$- & original & Local & $k$- & Uniform\\ & & Dual &means++ & $k$-means++ &Search &means-- & Sample\\
 \hline 
  
(10,10,10K) &	1.0002&	-&	12.6865&	1.0318&	1.2852&	1.2852&	1.1501
  \\
 \hline
  
(10,10, 50K)&1.7474&	61.2273& 1.7475&	1.8293&	1.2351&	-&	144.8253

 \\
 \hline 
  
(10, 20, 10K) &	1.0002&	-&	8.3791	&1.0141	&1.2949&	-	&1.4066
  \\
\hline
 
(10, 20, 50K) &	6.9265&-&	33.7886&	29.9914&	1.2444&	-&	352.9934

\\

\hline
 
(20, 10, 10K) &	1.0002&	94.9299&	48.5067&	1.0392&	1.1547&	1.1547&	1.1499

\\
\hline
 
(20, 10, 50K) &	2.4857&	1.1048&	2.4857&	2.6371&	1.1293&	-&	76.9656

\\
 \hline
(20, 20, 10K) 	&1.0002	&-&	1.0422&	1.0369&	1.1532&1.1532	&1.3261

\\
\hline
 
(20, 20, 50K) 	&1.8725	&-	&1.8768	&42.7528&	1.1166&	-&	491.5800
\\
\hline

\end{tabular}
%\end{center}
\caption {The objective value for synthetic data sets with noise sampled from $[-1/2, 1/2]^d$.}
\label{table:failure}
\end{table*}
%\end{center}

%\begin{center}
\begin{table*}
%\begin{center}
\begin{tabular}{r|r|r|r|r|r|r|r}
\hline

\multirow{2}{*}{($d,k,z$)} & \algname & Primal &coreset $k$- & original & Local & $k$- & Uniform\\ & & Dual &means++ & $k$-means++ &Search &means-- & Sample\\
 \hline 
  
(10,10,10K) &	1.0002&	1.1138&	67.8622&	62.6428&	1.1152&	1.1152&	1.1363

  \\
 \hline
  
(10,10, 50K) &	1.0017&	82.2102&	168.1054&	101.0163&	1.1678&	1.1678&	107.6996

 \\
 \hline 
  
(10, 20, 10K) &1.0002& -&	58.4918&	74.9860&	1.1264&	1.1264&	1.3842

  \\
\hline
 
(10, 20, 50K)	&1.0018	&-&	172.8086&	188.8177&	-&	-&	348.5592
\\

\hline
 
(20, 10, 10K) &1.0002&	1.0410&	128.6235&	45.5434&	1.0432&	1.0432&	1.1440 
\\
\hline
 
(20, 10, 50K)  &1.0013&	1.0932&	159.8269&	151.0288&	1.0907&	-&	217.1559

\\
 \hline
(20, 20, 10K) & 1.0002&	-&	119.9495&	98.5209&	1.0404&	1.0404&	1.3629

\\
\hline
 
(20, 20, 50K) &	1.0016&	-&	200.1431&	242.0977&	1.0535&	-&	454.5076

\\
\hline
\end{tabular}
%\end{center}
\caption {The objective value for synthetic data sets with noise sampled from $[-5/2, 5/2]^d$.}
\label{table:failure}
\end{table*}
%\end{center}

%%%%%%%%%%%%%%%%%%%%%%%%
%precision
%%%%%%%%%%%%%%%%%%%%%%%%
%\begin{center}
\begin{table*}
%\begin{center}
\begin{tabular}{r|r|r|r|r|r|r|r}
\hline

\multirow{2}{*}{($d,k,z$)} & \algname & Primal &coreset $k$- & original & Local & $k$- & Uniform\\ & & Dual &means++ & $k$-means++ &Search &means-- & Sample\\
 \hline 
  
(10,10,10K) & 1 & - & 0.9999 & 1 & 1 & 1 & 1
  \\
 \hline
  
(10,10, 50K) & 1 & 0.7586 & 1 & 1 &
1 & - & 0.5983

 \\
 \hline 
  
(10, 20, 10K) & 1 & - & 0.9999 & 1 & 1 & - & 1
\\
\hline
 
(10, 20, 50K) & 0.9998 & - & 0.9956 & 0.9980 & 1 & - & 0.0768

\\

\hline
 
(20, 10, 10K) & 1 & 0.7070 & 1 & 1 & 1 & 1 & 1

\\
\hline
 
(20, 10, 50K) & 1 & 1 & 1 & 1 & 1 & - & 0.8830

\\
 \hline
(20, 20, 10K) & 1 & - & 1 & 1 & 1 & 1 & 1

\\
\hline
 
(20, 20, 50K) & 1 & - & 1 & 0.9999 & 1 & - & 0.2312

\\
\hline

\end{tabular}
%\end{center}
\caption {
The precision/recall value for synthetic data sets with noise sampled from $[-1/2, 1/2]^d$.
}
\label{table:failure}
\end{table*}
%\end{center}

%\begin{center}
\begin{table*}
%\begin{center}
\begin{tabular}{r|r|r|r|r|r|r|r}
\hline

\multirow{2}{*}{($d,k,z$)} & \algname & Primal &coreset $k$- & original & Local & $k$- & Uniform\\ & & Dual &means++ & $k$-means++ &Search &means-- & Sample\\
 \hline 
  
(10,10,10K) & 1 & 1 & 1 & 1 & 1 & 1 & 1

  \\
 \hline
  
(10,10, 50K) & 1 & 1 & 1 & 1 & 1 & 1 & 1

 \\
 \hline 
  
(10, 20, 10K) & 1 & - & 1 & 1 & 1 & 1 & 1

  \\
\hline
 
(10, 20, 50K) & 1 & - & 1 & 1 & - & - & 0.9999
\\

\hline
 
(20, 10, 10K) & 1 & 1 & 1 & 1 & 1 & 1 & 1

\\
\hline
 
(20, 10, 50K) & 1 & 1 & 1 & 1 & 1 & - & 1

\\
 \hline
(20, 20, 10K) & 1 & - & 1 & 1 & 1 & 1 & 1

\\
\hline
 
(20, 20, 50K) & 1 & - & 0.9999 & 1 & 1 & - & 1

\\
\hline

\end{tabular}
%\end{center}
\caption {The precision/recall value for synthetic data sets with noise sampled from $[-5/2, 5/2]^d$.}
\label{table:failure}
\end{table*}
%\end{center}

%\subsubsection{Real-world Data Sets}

%%%%%%%%%%%%%%%%%
%%%%%%%7*7%%%%%%%
%%%%%%%%%%%%%%%%%
\begin{table*}
%\begin{center}
\begin{tabular}{c|c|c|c|c|c|c|c}
\hline
\multirow{2}{*}{
                 } & \algname & $k$- & coreset $k$- & original  & Uniform  & Primal & Local \\& &means-- &  means++ &$k$-means++ & Sample &Dual &Search\\ \hline
\multirow{3}{*}{\textsc{Skin}-5} 
& 1         & 0.9740    & 1.0641    &  0.9525 & 1.1273 & 1.1934 & 0.9746\\ 
&  0.8065    & 0.7632    & 0.7653    & 0.7775 & 0.7575 & 0.7636 & 0.7632 \\
& 56        & 86        & 39        & 34 & 1 & 1274 & 86\\ \hline
\multirow{3}{*}{\textsc{Skin}-10} 
&  1 & 1.5082    & 1.4417          & 1.6676 & 1.4108 & 1.1197 & 1.5082\\
&  0.9424    & 0.9044    & 0.9012          & 0.8975 & 0.9346 & 0.9473& 0.9044\\
& 56        & 89        & 37                & 43 & 1 & 1931 & 89\\ \hline
\multirow{3}{*}{\textsc{Susy}-5} 
& 1 & 1.2096 & 1.0150 & 1.0017 & 1.1816 & 1.2093 & 1.1337 \\
&0.8518  & 0.8151 &  0.8622 & 0.8478 &  0.7622 & 0.8558 & 0.8151\\
& 1136 & 672  & 462  & 6900 & 97 & 4261 & 672\\ \hline
\multirow{3}{*}{\textsc{Susy}-10} 
&  1 & 1.1414 & 1.0091 & 1.0351 & 1.1611 & 1.2474 & 1.1414\\
&0.9774  & 0.9753 &  0.9865  & 0.9814 & 0.9816 & 0.9808 & 0.9753\\
& 1144 & 697 & 465 & 6054 & 98 & 5075 & 697 \\ \hline
\multirow{3}{*}{\textsc{Power}-5} 
&  1 & 1.0587 & 1.0815  & 1.0278 &1.2814 & 1.2655 & 1.0587 \\
& 0.6720 & 0.6857 &  0.7247 & 0.7116 & 0.7943 & 0.6481 & 0.6857\\
& 363 & 291 & 177 & 689&19 & 2494 & 291 \\ \hline
\multirow{3}{*}{\textsc{Power}-10} 
&  1 & 1.0625 & 1.0876 & 1.0535 & 1.2408 & 1.2299 & 1.0625 \\
& 0.9679 & 0.9673 &  0.9681 & 0.9649 &0.9821 & 0.9634 & 0.9673 \\
&350  &251  & 142 & 943 & 19 & 3097 & 251 \\ \hline
\multirow{3}{*}{\textsc{KddFull}} 
& 1  & 2.0259 & 1.5825 & 1.5756 & 1.1527 & 2.6394 & 2.0259\\
& 0.6187 &  0.6436 & 0.3088 & 0.3259 &0.5855 & 0.5947 & 0.6436 \\
& 1027 & 122 & 124 & 652 &104 & 844 & 122\\ \hline
\end{tabular}
%\end{center}
\caption {Experiment results on real-world data sets with $\Delta = 5, 10$. The top, middle, bottom in each entry are the objective (normalized relative to \algname), precision, and run time (sec.), resp.}
%\label{table:real}
\end{table*}

\end{document}